\documentclass[conference]{IEEEtran}
\IEEEoverridecommandlockouts

\usepackage{amsmath,amssymb,amsthm}
\usepackage{graphicx}
\usepackage{xcolor}
\usepackage{booktabs}
\usepackage{algorithm}
\usepackage{algpseudocode}
\usepackage[font=small,labelfont=bf]{caption}
\usepackage{subcaption}
\usepackage{cite}
\usepackage{bm}
\newcommand{\ket}[1]{\left|#1\right\rangle}
\usepackage{url}

\makeatletter
\newcommand{\linebreakand}{%
\end{@IEEEauthorhalign}\hfill\mbox{}\par
\mbox{}\hfill\begin{@IEEEauthorhalign}
}
\makeatother

\newtheorem{theorem}{Theorem}
\newtheorem{proposition}[theorem]{Proposition}
\newtheorem{definition}[theorem]{Definition}
\newtheorem{corollary}[theorem]{Corollary}

\newcommand{\R}{\mathbb{R}}

\newcommand{\norm}[1]{\left\|#1\right\|}
\newcommand{\abs}[1]{\left|#1\right|}
\newcommand{\HC}{H_C}
\newcommand{\HM}{H_M}
\newcommand{\vphi}{\varphi}
\newcommand{\vthet}{\bm{\theta}}
\newcommand{\vgam}{\bm{\gamma}}
\newcommand{\vbet}{\bm{\beta}}
\newcommand{\Atil}{\tilde{A}}
\newcommand{\Vtil}{\tilde{V}}
\newcommand{\Fhat}{\hat{F}}

\DeclareMathOperator*{\argmin}{arg\,min}
\DeclareMathOperator{\range}{range}
\definecolor{myblue}{RGB}{37,99,235}
\definecolor{myred}{RGB}{220,38,38}
\definecolor{mygreen}{RGB}{22,163,74}
\definecolor{mypurple}{RGB}{124,58,237}

\begin{document}

\title{\textbf{Hamiltonian-Guided Leverage Embedding:}\\
       Robust Subspace Compression for Efficient QAOA Parameter Estimation}

\author{
  \IEEEauthorblockN{
   1\textsuperscript{st} Sumanta Mukherjee\IEEEauthorrefmark{1},
   2\textsuperscript{nd} Kalyan Dasgupta\IEEEauthorrefmark{1},
   3\textsuperscript{rd} Surya Shravan Kumar Sajja\IEEEauthorrefmark{1},
   }
   \IEEEauthorblockA{
   \{sumanm03, kalyand1, suryasku\}@in.ibm.com
   }
  \\
  \linebreakand
  \IEEEauthorblockN{
   4\textsuperscript{th} Kameshwaran Sampath\IEEEauthorrefmark{1},
   5\textsuperscript{th} Abhishek Singh\IEEEauthorrefmark{1},
   6\textsuperscript{th} Dhriti Verma\IEEEauthorrefmark{2},
   }
   \IEEEauthorblockA{
   \{kameshwaran.s, abhishek.s\}@in.ibm.com, dverma@us.ibm.com
   }
  \\
  \linebreakand
  \IEEEauthorblockN{
  7\textsuperscript{th} Dzung Phan\IEEEauthorrefmark{2},
  8\textsuperscript{th} Jayant Kalagnanam\IEEEauthorrefmark{2}
  }
  \IEEEauthorblockA{
    \IEEEauthorrefmark{1}IBM Research, Bangalore, India\\
    \IEEEauthorrefmark{2}IBM Research, Yorktown, NY, USA\\
 \{phandu, jayant\}@us.ibm.com
  }
}

\sloppy
\maketitle

\begin{abstract}
The Quantum Approximate Optimization Algorithm (QAOA) is a hybrid quantum-classical framework for combinatorial optimization on near-term quantum devices. A central bottleneck is the classical estimation of its variational parameters $\vgam$ and $\vbet$, which must be optimized over a high-dimensional, non-convex landscape corrupted by sampling noise. We observe that the classical feature matrices constructed from QAOA measurement samples exhibit pronounced low-rank structure, and exploit this property for noise-robust, reduced-dimension parameter search. We present the \emph{Hamiltonian-Guided Leverage Embedding} (HGLE) algorithm - a hybrid pipeline that encodes low-energy quantum samples into a weighted Ising feature matrix and compresses it via leverage-score row sampling, provably preserving the dominant rank-$r$ subspace geometry. The compressed representation drives a classical trust-region loop for $(\vgam,\vbet)$ estimation at a fraction of the original cost. We provide formal guarantees
for rank preservation and energy approximation error, and further derive an
argmin-preservation corollary showing that the minimizer of the rank-$r$
surrogate achieves a true objective value within $2\kappa_r\norm{c}_2\sqrt{d}$
of the true optimum, where $\kappa_r$ is the residual fraction of the cost
vector outside the retained subspace. We demonstrate robustness across
problem types (Max-Cut, Maximum Independent Set) and graph topologies of
varying density.
\end{abstract}

\begin{IEEEkeywords}
QAOA, Ising Hamiltonian, leverage score embedding, hybrid quantum-classical, subspace compression, randomized linear algebra, variational parameter optimization, NISQ
\end{IEEEkeywords}


\section{Introduction}
\label{sec:intro}

Combinatorial optimization underlies a broad range of problems in logistics, finance, materials science, and machine learning.  The Quantum Approximate Optimization Algorithm (QAOA), introduced by Farhi, Goldstone, and Gutmann~\cite{farhi2014}, offers a principled hybrid approach: a parameterized quantum circuit prepares a quantum state encoding candidate solutions, while a classical optimizer tunes the circuit parameters to minimize an objective function encoded in an Ising Hamiltonian.

Two canonical combinatorial problems illustrate the scope and versatility of
the QAOA framework.  We adopt the standardized Hamiltonian encodings provided by HamLib~\cite{sawaya2024}  throughout this work.

\paragraph{Max-Cut.}
Given a graph $G=(V,E)$ with edge weights $w_{ij}$, the Max-Cut problem seeks a bipartition of the vertex set that maximizes the total weight of edges crossing the cut.  Encoding each vertex as a spin $z_i\in\{+1,-1\}$, the cost Hamiltonian is
\begin{equation}\label{eq:maxcut_ham}
  \HC^{\text{MC}} \;=\; \sum_{(i,j)\in E} \frac{w_{ij}}{2}\bigl(1 - z_i\,z_j\bigr),
\end{equation}
so that antiparallel spins across an edge contribute $w_{ij}$ to the
objective.  Max-Cut is NP-hard~\cite{karp1972} and serves as the original
benchmark for QAOA~\cite{farhi2014}.  Because the cost Hamiltonian involves only two-body $ZZ$ couplings and no local fields, the resulting QAOA energy landscape $\langle\HC^{\text{MC}}\rangle(\gamma,\beta)$ tends to exhibit smooth, periodic structure along the $\gamma$ axis, making it comparatively amenable to gradient-based parameter optimization at low circuit depth.

\paragraph{Maximum Independent Set (MIS).}
For the same graph $G$, MIS seeks the largest subset $S\subseteq V$ such that no two vertices in $S$ are adjacent.  Mapping $x_i=1$ (vertex in $S$) and $x_i=0$ (vertex excluded) and writing $z_i = 1-2x_i$, the standard Ising encoding introduces a penalty strength $P > 0$ and yields
\begin{equation}\label{eq:mis_ham}
  \HC^{\text{MIS}} \;=\; -\sum_{i\in V} \frac{1-z_i}{2}
    \;+\; P \!\!\sum_{(i,j)\in E} \frac{(1-z_i)(1-z_j)}{4}\,,
\end{equation}
where the first term rewards inclusion of vertices and the second penalizes
edges internal to $S$.  MIS is likewise NP-hard~\cite{karp1972}, but its QAOA formulation differs from Max-Cut in two important respects.  First, the Hamiltonian contains both single-qubit ($Z_i$) and two-qubit ($Z_iZ_j$) terms, breaking the translational symmetry that simplifies Max-Cut landscapes. Second, maintaining feasibility (no adjacent selected vertices) requires either a large penalty $P$ or constrained
mixers~\cite{hadfield2019}, both of which introduce additional ruggedness
into the energy landscape -- isolated, irregularly shaped basins rather
than smooth periodic valleys (see Figure~\ref{fig:landscape}).

\paragraph{Complexity and QAOA implications.}
Although both problems are NP-hard, they occupy different positions in the landscape of approximability.  Max-Cut admits a celebrated $0.878$-approximation via semidefinite programming~\cite{goemans1995}, whereas MIS on general graphs
cannot be approximated within $n^{1-\epsilon}$ for any $\epsilon>0$ unless $\text{P}=\text{NP}$~\cite{zuckerman2007}.  For QAOA specifically, the structural differences manifest as follows: (i)~Max-Cut Hamiltonians are \emph{diagonal in the computational basis} with only $ZZ$ couplings, leading to parameter landscapes whose periodicity can be analytically exploited~\cite{brandao2018, zhou2020}; (ii)~MIS Hamiltonians include linear $Z$ terms and penalty couplings, producing landscapes with more local minima and narrower basins, which demand more sophisticated parameter search strategies.
These contrasts motivate a parameter estimation approach that is agnostic to the specific problem encoding and robust to varying degrees of landscape ruggedness.

Practical QAOA deployment faces a significant bottleneck in the parameter optimization loop.  The variational parameters $\vgam = (\gamma_1,\ldots,\gamma_p)$ and $\vbet = (\beta_1,\ldots,\beta_p)$
must be estimated from quantum measurements, and the classical optimization landscape is high-dimensional, non-convex, and subject to sampling noise~\cite{mcclean2018}.  Standard gradient-based and gradient-free classical optimizers often struggle with scalability and barren plateaus.
Existing initialization strategies (Section~\ref{sec:related}) mitigate the problem but either require external training data, multiply the circuit budget, or are tailored to specific problem classes.
Moreover, the ruggedness of the energy landscape varies markedly with the problem formulation and the underlying graph topology (Figure~\ref{fig:landscape}), making a one-size-fits-all optimizer
unlikely to succeed across problem instances.

As we show in Section~\ref{sec:motivation}, the classical feature matrices constructed from QAOA measurement samples exhibit pronounced low-rank structure that can be exploited via leverage-score sampling to compress the sample matrix while provably preserving the subspace geometry most relevant to parameter estimation.

Our contribution is the HGLE algorithm, a hybrid pipeline that:
(i) uses a QAOA-style quantum circuit to generate low-energy
spin samples; (ii) encodes them as a weighted classical feature
matrix; (iii) compresses the matrix via leverage-score row
sampling while provably preserving the dominant rank-$r$
subspace; (iv) runs a classical trust-region loop for estimating
$(\bm{\gamma}, \bm{\beta})$ on the compressed representation;
and (v) comes with a formal argmin-preservation guarantee:
the minimizer found on the compressed surrogate is provably
close, in true objective value, to the minimizer of the
uncompressed problem, with the gap controlled by a single
scalar that measures how much of the cost Hamiltonian lies
outside the retained subspace. Figure~\ref{fig:pipeline}
contrasts standard QAOA and HGLE-QAOA;
Figure~\ref{fig:landscape} illustrates how the energy
landscape varies across graph topologies and problem types.

\begin{figure*}[!htbp]
  \centering
  \includegraphics[width=0.95\linewidth]{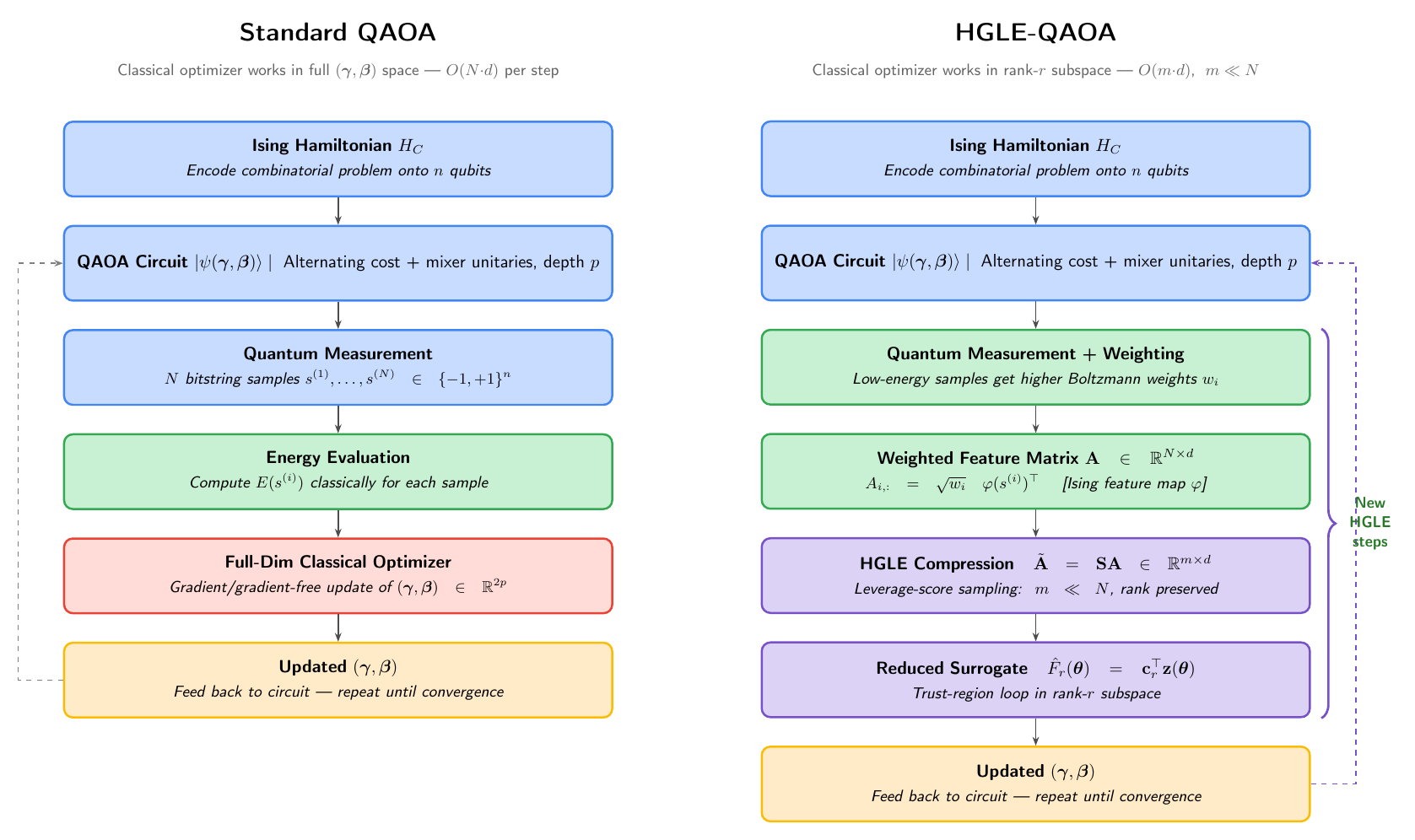}
  \caption{%
    \textbf{Pipeline comparison.}
    \emph{Left:} Standard QAOA loop with cost $O(N\cdot d)$ per iteration.
    \emph{Right:} HGLE-QAOA inserts two stages (purple): (1)~constructing a weighted Ising feature matrix $A\in\R^{N\times d}$, and (2)~leverage-score compression to $\Atil\in\R^{m\times d}$ with $m\ll N$.  The optimizer then works in a rank-$r$ surrogate space with formal argmin-preservation guarantees (Corollary~\ref{cor:argmin}).}
  \label{fig:pipeline}
\end{figure*}


\section{Background and Related Work}
\label{sec:related}

\subsection{QAOA and Variational Parameter Optimization}
QAOA belongs to the broader class of Variational Quantum
Eigensolvers~\cite{peruzzo2014}.  Parameter landscape studies have revealed that shallow circuits often admit tractable parameter-transfer and warm-starting strategies~\cite{brandao2018, zhou2020}, while deeper circuits face trainability challenges~\cite{mcclean2018}.  Recent analytic results show that QAOA at constant depth achieves non-trivial performance concentration on large random regular graphs~\cite{basso2022a, basso2022b}.

As discussed in Section~\ref{sec:intro}, the ruggedness of $\langle \HC \rangle(\gamma,\beta)$ varies with both problem type and graph topology (Figure~\ref{fig:landscape}), demanding parameter estimation strategies that are robust across problem classes.

\begin{figure}[!htbp]
  \centering
  \includegraphics[width=0.95\linewidth]{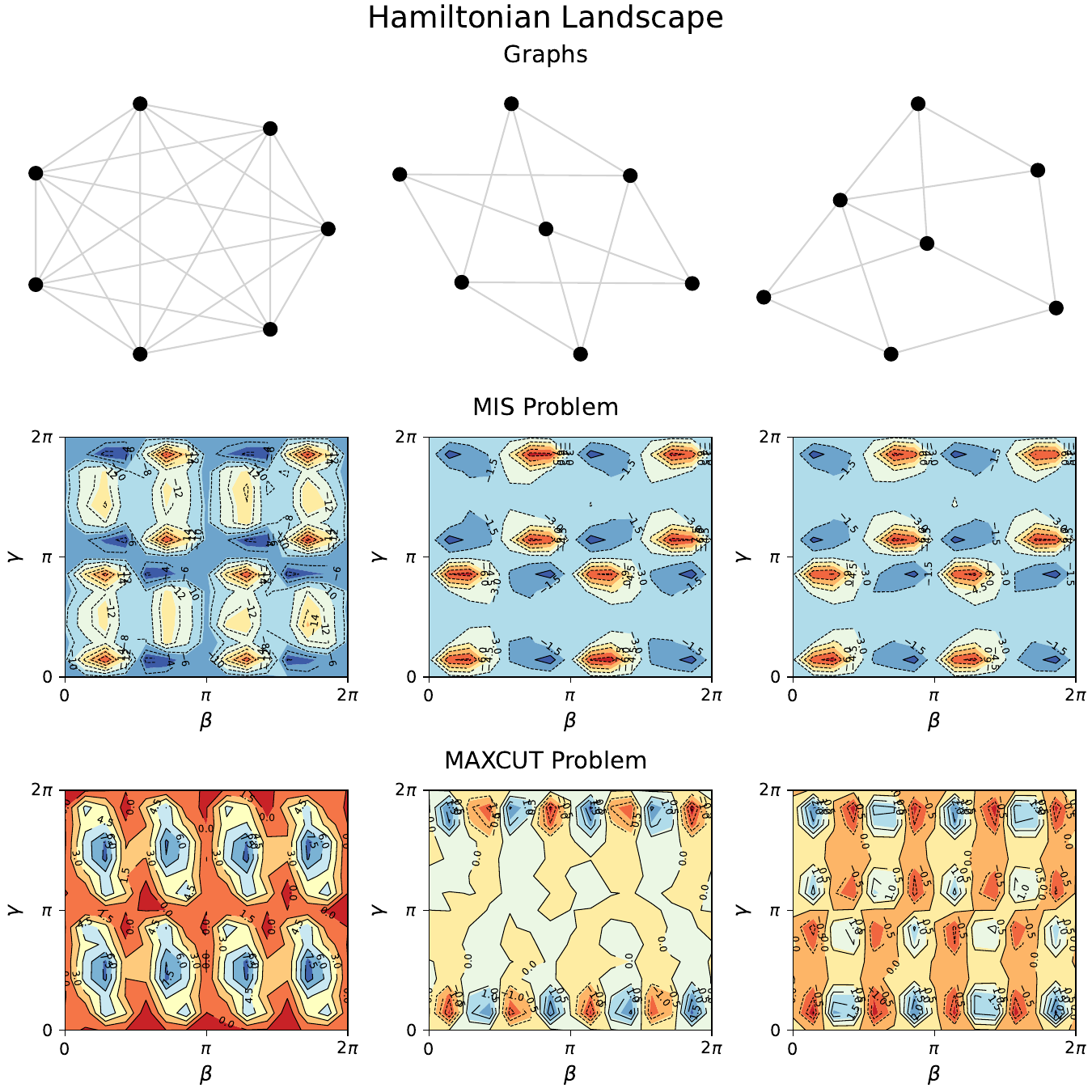}
  \caption{%
    \textbf{System-dependent Hamiltonian energy landscapes.}
    \emph{Top row:} Three graph instances of increasing sparsity (dense,
    moderate, sparse).
    \emph{Middle row:} QAOA energy landscape $\langle \HC \rangle(\gamma,\beta)$
    for the Maximum Independent Set (MIS) problem on each graph.
    \emph{Bottom row:} Corresponding landscapes for the Max-Cut problem. Dense graphs produce highly rugged landscapes with numerous local minima and narrow basins, while sparser graphs yield smoother contours with more accessible global structure.  The landscape topology also varies markedly between problem types: MIS landscapes exhibit isolated, irregularly shaped basins, whereas Max-Cut landscapes display greater periodicity along~$\gamma$.  These system- and problem-dependent variations in ruggedness motivate the need for a noise-filtering,
    rank-reducing compression step such as HGLE.}
  \label{fig:landscape}
\end{figure}

\subsection{Classical Optimizers for QAOA}

The choice of classical optimizer is critical to QAOA performance~\cite{lavrijsen2020}.  \emph{Gradient-free methods} such as COBYLA~\cite{powell1994} and Nelder--Mead~\cite{nelder1965} tolerate noisy objectives but scale poorly with parameter dimension $2p$.  \emph{Gradient-based methods} like L-BFGS-B~\cite{liu1989} exploit curvature for faster convergence but suffer from shot-noise corruption and barren plateaus in deeper circuits~\cite{mcclean2018}; SPSA~\cite{spall1998} mitigates noise sensitivity at the cost of slower progress.  \emph{Trust-region methods}~\cite{conn2000} construct local quadratic models within an adaptive radius $\rho_k$, providing built-in step-size control that is particularly robust to noise and well-suited to the low-dimensional surrogate produced by HGLE.

\subsection{Parameter Initialization Strategies}

Several complementary strategies address QAOA parameter initialization.  Zhou et~al.~\cite{zhou2020} introduced the INTERP and FOURIER heuristics, which bootstrap depth-$p$ parameters from optimized depth-$(p{-}1)$ solutions by interpolation or spectral extrapolation.  Galda et~al.~\cite{galda2021} showed that optimal parameters transfer effectively between random graph instances of similar structure, enabling reuse without per-instance optimization.  Multi-start
strategies~\cite{shaydulin2019} mitigate local-optima trapping by running several random initializations in parallel and retaining the best result, at the cost of a linear increase in circuit evaluations.

A distinct line of work modifies the quantum circuit itself rather than the classical optimizer.  Dupont et~al.~\cite{dupont2024} introduced quantum-informed recursive optimization algorithms (QIRO), which use quantum measurements to inform classical graph reductions, iteratively shrinking the problem instance.  Egger et~al.~\cite{egger2021} proposed warm-starting QAOA by encoding a classically pre-computed approximate solution into the initial state, biasing the circuit toward feasible regions from the outset.
Learned initialization methods use graph neural networks~\cite{jain2022} or reinforcement learning~\cite{khairy2020} to predict good starting parameters from problem structure.  Sureshbabu et~al.~\cite{sureshbabu2024} extended fixed-angle parameter strategies to weighted graph problems. HGLE differs from all of the above in its mechanism: rather than transferring parameters across instances, modifying the initial quantum state, or learning a parameter predictor, it compresses the \emph{classical measurement data} via leverage-score sampling to construct a noise-filtered surrogate landscape, then initializes the optimizer within that surrogate's basin structure.
\subsection{Randomized Linear Algebra}
Randomized linear algebra techniques - Johnson-Lindenstrauss projections, random sketching, and leverage-score sampling - are well established in large-scale optimization~\cite{woodruff2014,
drineas2012, martinsson2020}.  Leverage scores quantify each data point's influence on regression fits and provide a data-aware basis for row sampling that preserves spectral structure.  At the interface of RLA and quantum computing, Chepurko et~al.~\cite{chepurko2020} developed quantum-inspired algorithms that use leverage-score sampling and subspace embeddings as core primitives, and Shao~\cite{shao2023} showed that quantum algorithms can accelerate leverage-score computation itself.  Independently, classical shadows~\cite{huang2020} provide sample-efficient estimation of quantum-state properties; HGLE is complementary in that it compresses the \emph{classical post-measurement} feature matrix rather than the measurement protocol.


\section{Motivation: Low-Rank Structure in QAOA Samples}
\label{sec:motivation}

\subsection{The Parameter Search Bottleneck}

Standard QAOA parameter optimization treats the expected energy
$\langle\HC\rangle(\vgam,\vbet)$ as a black-box function evaluated through
quantum measurements.  Each evaluation requires $N$ circuit shots, and the
classical optimizer must navigate a $2p$-dimensional, non-convex, noisy landscape (Figure~\ref{fig:landscape}).  Gradient-based methods
suffer from barren plateaus and noisy curvature models~\cite{mcclean2018, liu1989}, while gradient-free methods scale poorly with
dimension~\cite{nelder1965, powell1994} (see Section~\ref{sec:related}).  The cost per iteration, $O(N\cdot d)$ with $d=1+n+|E|$, makes
direct optimization prohibitive for large instances.

\subsection{Low-Rank Structure in Quantum Sample Clouds}

QAOA circuits, by design, bias measurements toward low-energy spin configurations.  When these
$N$ measurement outcomes are mapped through the Ising feature map
$\vphi:\{-1,+1\}^n\to\R^d$ and assembled into the weighted feature matrix
$A\in\R^{N\times d}$, the resulting matrix is far from full-rank.  Its singular
value spectrum decays rapidly, with a small number $r\ll d$ of dominant
singular values capturing the essential geometry of the sample cloud.

This low-rank structure reflects the physical
concentration of measurement outcomes near the ground-state manifold
of $\HC$.  High-energy rows contribute primarily to the noise-dominated spectral tail, while the
leading $r$ singular directions encode the subspace most informative for
parameter estimation.  The \emph{statistical leverage scores}
quantify each row's influence on the rank-$r$ approximation, providing a
principled basis for compression.

\subsection{Design Principles}

These observations lead to three design principles for the HGLE algorithm:
\begin{enumerate}
  \item \textbf{Exploit Hamiltonian structure.}  The Ising feature map $\vphi$
    decomposes the energy linearly, enabling exact energy reconstruction from
    the feature matrix.  This algebraic structure is preserved through
    compression.
  \item \textbf{Provable subspace and argmin preservation.}
    Leverage-score row sampling yields a compressed matrix
    $\tilde{A} \in \mathbb{R}^{m \times d}$ with $m = O(r \log r)$
    rows that is an $\varepsilon$-subspace embedding for the
    dominant rank-$r$ column space, guaranteeing that rank and
    spectral geometry are maintained.  This in turn yields a
    provable bound on the gap between the rank-$r$ surrogate
    minimizer and the true minimizer: the gap is controlled by
    a single scalar $\kappa_r$ measuring how much of the cost
    Hamiltonian lies outside the retained subspace
    (Corollary~\ref{cor:argmin}).
  \item \textbf{Noise-robust reduced-dimension optimization.}  By projecting
    the parameter estimation problem onto the rank-$r$ subspace, HGLE filters
    out the noise-dominated spectral directions that generate spurious local
    minima, yielding a smoother surrogate landscape amenable to a classical
    trust-region loop.
\end{enumerate}


\section{Hamiltonian-Guided Leverage Embedding (HGLE)}
\label{sec:hgle}

\subsection{Problem Setup and Notation}

We consider an Ising cost Hamiltonian on $n$ qubits:
\begin{equation}
  \HC = \sum_{i=1}^{n} h_i Z_i + \sum_{(i,j)\in E} J_{ij} Z_i Z_j,
  \label{eq:hc}
\end{equation}
where $h_i\in\R$ are local fields, $J_{ij}\in\R$ are couplings, and $Z_i$ is the Pauli-$Z$ operator on qubit~$i$.  Low-energy spin configurations
$s\in\{-1,+1\}^n$ correspond to high-quality combinatorial solutions.

A QAOA state of depth $p$ is parameterized by $\vthet=(\vgam,\vbet)\in\R^{2p}$:
\begin{equation}
  \ket{\psi(\vgam,\vbet)} = \prod_{k=1}^{p} e^{-i\beta_k \HM}
                            e^{-i\gamma_k \HC} \ket{+}^{\otimes n},
  \label{eq:qaoa}
\end{equation}
where $\HM = \sum_i X_i$ is the transverse-field mixer. Measuring this state $N$ times yields a cloud of spin configurations $s^{(1)},\ldots,s^{(N)}\in\{-1,+1\}^n$ with energies $E^{(t)} = E(s^{(t)})$, where
\begin{equation}
  E(s) = \sum_{i=1}^n h_i s_i + \sum_{(i,j)\in E} J_{ij} s_i s_j.
\end{equation}

\subsection{Weighted Ising Feature Matrix}

Define the Ising feature map $\vphi:\{-1,+1\}^n\to\R^d$, $d=1+n+|E|$, that collects the constant, one-body, and two-body Ising terms.  The energy decomposes linearly as $E(s) = c^\top\vphi(s)$, where $c\in\R^d$ is the Hamiltonian coefficient vector.

Nonnegative sample weights $\{w_t\}$ with $\sum_t w_t = 1$ allow two natural choices:
\begin{align}
  w_t &= \tfrac{1}{N} \quad\text{(uniform)}, \label{eq:wunif}\\
  w_t &\propto \exp\bigl(-\tau(E^{(t)}-E_{\min})\bigr) \quad\text{(low-energy)},
  \label{eq:wenergy}
\end{align}
where $\tau>0$ is an inverse-temperature-like parameter. The \emph{weighted feature matrix} is
\begin{equation}
  A \in \R^{N\times d}, \quad A_{t,:} = \sqrt{w_t}\,\vphi(s^{(t)})^\top.
  \label{eq:A}
\end{equation}

\subsection{Leverage-Score Row Sampling}

Let $A = U\Sigma V^\top$ be the thin SVD of $A$ and fix target rank $r\leq
\operatorname{rank}(A)$.  The rank-$r$ truncated SVD is $A_r = U_r\Sigma_r
V_r^\top$.  Define \emph{row leverage scores}:
\begin{equation}
  \ell_t = \norm{e_t^\top U_r}^2_2, \quad p_t = \ell_t / r,
  \quad t=1,\ldots,N.
  \label{eq:lev}
\end{equation}
We construct a sampling matrix $S\in\R^{m\times N}$ by drawing $m$ row indices $\xi_1,\ldots,\xi_m$ independently according to $\{p_t\}$, with rescaling $S_{j,t} = 1/\sqrt{m p_t}$ when $t=\xi_j$ (and zero otherwise).  The \emph{compressed matrix} is
\begin{equation}
  \Atil = SA \in \R^{m\times d}, \quad m \ll N.
  \label{eq:Atil}
\end{equation}

\subsection{Formal Guarantees}

\begin{definition}[$\varepsilon$-subspace embedding]
$S$ is an $\varepsilon$-subspace embedding for $\mathrm{col}(A_r)$ if
\[
  (1-\varepsilon)\norm{U_r x}^2_2 \;\leq\; \norm{SU_r x}^2_2 \;\leq\;
  (1+\varepsilon)\norm{U_r x}^2_2 \quad \forall x\in\R^r.
\]
\end{definition}

\begin{proposition}[Rank preservation]
If $S$ is an $\varepsilon$-subspace embedding for $\mathrm{col}(A_r)$ with
$0<\varepsilon<1$, then $\operatorname{rank}(SA_r)=r$.
\end{proposition}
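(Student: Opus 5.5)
The plan is to show that $SU_r$ has trivial kernel, and then to transfer full column rank through the factorization $SA_r = (SU_r)\,\Sigma_r V_r^\top$. We may assume $r\le \operatorname{rank}(A)$, so the $r$ retained singular values are strictly positive and $\Sigma_r$ is an invertible $r\times r$ diagonal matrix. Also, $U_r\in\R^{N\times r}$ and $V_r\in\R^{d\times r}$ have orthonormal columns, so $\norm{U_r x}_2=\norm{x}_2$.

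\emph{Step 1 (injectivity of $SU_r$).} Take any $x\in\R^r$ with $SU_r x=0$. The lower inequality in the definition of an $\varepsilon$-subspace embedding gives
\[
0=\norm{SU_r x}_2^2\ \ge\ (1-\varepsilon)\norm{U_r x}_2^2=(1-\varepsilon)\norm{x}_2^2 .
\]
Since $\varepsilon<1$, this forces $x=0$. Hence the $m\times r$ matrix $SU_r$ has rank $r$. Implicitly this also requires $m\ge r$; that follows here, since otherwise a nonzero kernel vector would exist.

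\emph{Step 2 (transfer to $SA_r$).} We have $\operatorname{rank}(SA_r)\le \operatorname{rank}(A_r)=r$. For the reverse inequality, write $SA_r=(SU_r)\,\Sigma_r V_r^\top$.
\begin{itemize}
\item $V_r^\top$ has orthonormal rows, so $V_r^\top$ maps $\R^d$ onto $\R^r$: for any $y\in\R^r$, take $z=V_r y$, which gives $V_r^\top z=y$.
\item $\Sigma_r$ is a bijection of $\R^r$, so $\Sigma_r V_r^\top$ is also onto $\R^r$.
\end{itemize}
Therefore $\range(SA_r)=SU_r(\R^r)$, which has dimension $r$ by Step 1. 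This gives $\operatorname{rank}(SA_r)=r$.

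There is no real obstacle; the only points needing care are invoking $\varepsilon<1$ strictly and noting that $\Sigma_r$ is invertible because $r\le\operatorname{rank}(A)$. An equivalent check, if preferred, uses singular values. All eigenvalues of $U_r^\top S^\top S U_r$ lie in $[1-\varepsilon,1+\varepsilon]$, so $\sigma_{\min}(SU_r)\ge\sqrt{1-\varepsilon}>0$. Then $\sigma_r(SA_r)\ge \sigma_{\min}(SU_r)\,\sigma_r(A)>0$.
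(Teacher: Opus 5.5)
Your proof is correct and follows the same route as the paper: the lower embedding bound with $\varepsilon<1$ makes $SU_r$ injective, and full rank then transfers through the factorization $SA_r=(SU_r)\Sigma_r V_r^\top$. Your explicit check that $V_r^\top$ maps onto $\R^r$ fills in a step the paper leaves implicit, since the paper cites only the invertibility of $\Sigma_r$.
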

\noindent\textit{Proof.}
This is standard; we include it for completeness.
For any nonzero $x\in\R^r$, the embedding condition gives $\norm{SU_r x}^2_2 \geq (1-\varepsilon)\norm{U_r x}^2_2 > 0$. Hence $SU_r$ has trivial null space, so $\operatorname{rank}(SU_r)=r$. Since $\Sigma_r$ is invertible, $SA_r = (SU_r)\Sigma_r V_r^\top$ has rank $r$ as well. \hfill$\square$

\begin{theorem}[Leverage-score guarantee~\cite{drineas2012}]
\label{thm:lev}
The following is due to Drineas et al.~\cite{drineas2012}; we restate it in our notation.
For $0<\varepsilon,\delta<1$ there exists an absolute constant $C>0$ such
that, if
\begin{equation}
  m \geq C\,\frac{r\log(r/\delta)}{\varepsilon^2},
  \label{eq:m}
\end{equation}
then with probability at least $1-\delta$, $S$ is an $\varepsilon$-subspace embedding for $\mathrm{col}(A_r)$, and consequently $\operatorname{rank}(\Atil_r)=r$.
\end{theorem}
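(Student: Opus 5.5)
The plan is to reduce the subspace-embedding condition to a spectral-norm bound on a sum of independent random matrices and apply a matrix Chernoff (or matrix Bernstein) inequality. Since $U_r$ has orthonormal columns, $\norm{U_r x}_2=\norm{x}_2$. The embedding condition is therefore equivalent to
\[
  \norm{U_r^\top S^\top S U_r - I_r}_2 \leq \varepsilon .
\]
Writing $u_t^\top = e_t^\top U_r$, so that $\norm{u_t}_2^2=\ell_t$ and $\sum_t \ell_t = r$, we have $p_t=\ell_t/r$ as a genuine probability distribution. The sketch then decomposes as
\[
  U_r^\top S^\top S U_r=\frac1m\sum_{j=1}^m X_j,\qquad X_j=\frac{u_{\xi_j}u_{\xi_j}^\top}{p_{\xi_j}}.
\]
Here the $X_j$ are i.i.d.\ positive semidefinite matrices with $\mathbb{E}X_j=\sum_t u_tu_t^\top = U_r^\top U_r=I_r$. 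Rows with $\ell_t=0$ are never sampled and can be discarded.

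The key step is the uniform bound that leverage-score sampling gives:
\[
  \norm{X_j}_2=\frac{\norm{u_{\xi_j}}_2^2}{p_{\xi_j}}=\frac{\ell_{\xi_j}}{\ell_{\xi_j}/r}=r .
\]
So each summand $X_j/m$ has spectral norm exactly $r/m$, and the expectation of the sum has all eigenvalues equal to $1$. Applying the matrix Chernoff bound (Tropp's form) to $\frac1m\sum_j X_j$ with $R=r/m$ and $\mu_{\min}=\mu_{\max}=1$ gives
\[
  \Pr\Bigl[\lambda_{\min}\leq 1-\varepsilon\Bigr]+\Pr\Bigl[\lambda_{\max}\geq 1+\varepsilon\Bigr]\leq 2r\exp\!\left(-\frac{c\,\varepsilon^2 m}{r}\right)
\]
for an absolute constant $c>0$ and $0<\varepsilon<1$. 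Setting the right-hand side to at most $\delta$ requires $m\geq c^{-1} r\log(2r/\delta)/\varepsilon^2$. This is implied by \eqref{eq:m} for a suitable absolute $C$: if $r\geq 2$ then $\log(2r/\delta)\leq 2\log(r/\delta)$, and the case $r=1$ is handled directly, since a rank-one embedding is exact given $\norm{X_j}_2=1$. This matches the Drineas et al.\ statement.

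The rank conclusion follows from the Rank preservation proposition. With probability at least $1-\delta$, $S$ is an $\varepsilon$-subspace embedding with $\varepsilon<1$, so $\operatorname{rank}(SA_r)=r$. To transfer this to $\Atil_r$, the best rank-$r$ approximation of $\Atil=SA$, I will read the statement's $\Atil_r$ as the compressed rank-$r$ surrogate $SA_r$, which is the object the proposition controls. Under a literal reading as the truncated SVD of $SA$, I would add a short argument: $\operatorname{rank}(\Atil_r)=\min(r,\operatorname{rank}(SA))$, and $\operatorname{rank}(SA)\geq r$ because $SA V_r = SU_r\Sigma_r$ has rank $r$ (using $V_r^\top V_{\perp}=0$).

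I expect the main obstacle to be the constant-tracking in the matrix Chernoff step. Specifically, I need to select the right form of the tail bound so that both tails collapse into the single $\log(r/\delta)$ dependence with an absolute constant independent of $N$ and $d$. If Chernoff turns out to be awkward, I would try matrix Bernstein instead, using the variance bound $\norm{\mathbb{E}X_j^2}_2\leq r$.
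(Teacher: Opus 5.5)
Your argument is correct. The paper does not prove this theorem itself: it imports the embedding guarantee from Drineas et al.\ and gets the rank clause from the Rank-preservation proposition. What you have written is therefore a self-contained derivation standing in for a citation. It is the standard matrix-Chernoff argument for leverage-score sampling, built on the isotropic decomposition $U_r^\top S^\top S U_r=\frac1m\sum_j X_j$ with $\mathbb{E}X_j=I_r$ and the deterministic bound $\norm{X_j}_2=r$ that the choice $p_t=\ell_t/r$ forces.

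Compared with the bare citation, your version buys three things.
\begin{enumerate}
\item An explicit constant. Tropp's tails give failure probability at most $2r\,e^{-\varepsilon^2 m/(3r)}$, so $C=6$ works via $\log(2r/\delta)\le 2\log(r/\delta)$ for $r\ge 2$.
\item A correct treatment of the edge case $r=1$. There $\log(1/\delta)$ can be arbitrarily small, but the sketch is exact because $\frac1m\sum_j X_j$ is the scalar $1$.
\item A justification of the ``consequently'' clause when $\Atil_r$ is read as the rank-$r$ truncation of $SA$. The proposition alone only yields $\operatorname{rank}(SA_r)=r$, and your identity $SAV_r=SU_r\Sigma_r$ is exactly what gives $\operatorname{rank}(SA)\ge r$.
\end{enumerate}

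What the citation buys is coverage of approximate leverage scores, which Section~\ref{sec:impl} relies on. Your proof extends to that case with no new idea: if $p_t\ge\beta\ell_t/r$, then still $\mathbb{E}X_j=I_r$ and $\norm{X_j}_2\le r/\beta$, so $m$ grows by a factor $1/\beta$. As written, however, it covers only exact scores.
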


\subsection{HGLE Algorithm}
\begin{algorithm}[H]
\caption{Hamiltonian-Guided Leverage Embedding (HGLE)}
\label{alg:hgle}
\begin{algorithmic}[1]
\Require Ising Hamiltonian $\HC$,  
         parameters $\vthet$,
         shots $N$,
         rank $r$,
         distortion $\varepsilon$, 
         failure level $\delta$
\Ensure Compressed matrix $\Atil = SA \in \R^{m\times d}$
\State Prepare $\ket{\psi(\vthet)}$ and measure $N$ times:
       $s^{(1)},\ldots,s^{(N)}\in\{-1,+1\}^n$
\State Compute energies $E^{(t)} = E(s^{(t)})$
\State Choose weights $w_t$ via \eqref{eq:wunif} or \eqref{eq:wenergy}
\State Build weighted feature matrix $A$ via \eqref{eq:A}
\State Estimate effective rank $r$ and approximate leading basis $U_r$
\State Compute leverage scores $\ell_t = \norm{e_t^\top U_r}^2_2$,
       set $p_t = \ell_t/r$
\State Choose $m$ per \eqref{eq:m}; sample indices $\xi_1,\ldots,\xi_m\sim p_t$
\State Form $S$ and return $\Atil = SA$
\end{algorithmic}
\end{algorithm}

\subsection{Network Sparsification for Simulator-Scale QAOA}

To extend HGLE to larger graphs beyond direct hardware execution, we adopt a network sparsification scheme within a simulator-based pipeline. The approach proceeds in two stages.

First, \textbf{spectral qubit reordering} computes the Fiedler vector $\mathbf{v}_2$ of the weighted graph Laplacian $L = D - W$ and sorts qubits by their Fiedler coordinates. This places strongly coupled pairs at adjacent indices, minimizing the effective interaction range without altering the graph topology.

For \textbf{Max-Cut}, the Fiedler ordering groups densely interconnected clusters into contiguous qubit blocks, placing the strongest cut-contributing edges at short index distances and preserving the landscape's periodic structure after truncation.
For \textbf{MIS}, the ordering concentrates penalty couplings $|J_{ij}|=P/4$ for vertices sharing many mutual neighbors at small index distances, prioritizing the tightest independence constraints.

Second, \textbf{distance-based sparsification} retains only couplings satisfying $|i' - j'| \leq k$ in the reordered index space, where $k$ is a bandwidth parameter. The resulting sparse Hamiltonian $\widetilde{H}$ governs QAOA circuit construction, reducing the two-qubit gate count from $O(p \cdot |E|)$ to $O(p \cdot k \cdot n)$ per depth-$p$ circuit. The fraction of coupling strength retained,
\begin{equation}
  \mathrm{coverage}(k) \;=\; \frac{\displaystyle\sum_{\substack{(i',j') \\ |i'-j'|\leq k}} |J_{i'j'}|}{\displaystyle\sum_{(i,j)\in E} |J_{ij}|},
\end{equation}
is computed beforehand to select the smallest $k$ achieving acceptable fidelity.

For \textbf{Max-Cut}, dropping a long-range coupling $(i',j')$ with $|i'-j'|>k$ removes a single $\mathrm{RZZ}$ gate from the cost layer. Because the Fiedler ordering has already pushed structurally peripheral edges to large index separations, the dropped terms correspond to edges whose removal least perturbs the dominant spectral properties of the cut objective. The smooth, periodic landscape of $\langle \HC^{\text{MC}}\rangle(\gamma,\beta)$ is therefore well preserved at moderate $k$, and the coverage metric reduces to the fraction of total edge weight retained since all couplings are of the same $Z_iZ_j$ type.

For \textbf{MIS}, sparsification has a qualitatively different effect. Each dropped penalty coupling removes an independence constraint from the circuit ansatz: the sparse Hamiltonian $\widetilde{H}$ no longer penalizes simultaneous selection of the corresponding vertex pair. Because MIS couplings are uniform ($J_{ij}=P/4$ for all edges), the coverage metric simplifies to the fraction of edges retained. However, the consequence of dropping an edge is asymmetric compared to Max-Cut: a missing penalty term can render previously infeasible bitstrings energetically favorable within the circuit, biasing the sampler toward constraint-violating configurations. The Fiedler ordering mitigates this by ensuring that the retained short-range penalties correspond to the tightest structural bottlenecks (clique-like substructures and high-degree neighborhoods), which are precisely the constraints most critical for maintaining feasibility. Nonetheless, for MIS the choice of $k$ must balance circuit reduction against the risk of increased constraint violations, and the single-qubit $Z_i$ reward terms, which are unaffected by sparsification, continue to drive the sampler toward selecting vertices regardless of the missing penalties.

Crucially, sparsification applies only to circuit construction; all cost-function evaluations (MIS size or Max-Cut weight) use the full Hamiltonian $H$ over the original graph. This separation ensures that the HGLE compression and trust-region parameter estimation operate on a faithful energy landscape while benefiting from substantially reduced circuit complexity at simulator scale.


\section{Reduced Classical Parameter Estimation}
\label{sec:reduced}

\subsection{Reduced Coordinates and the Master Inequality}
Using the right singular vectors $V_r$ of $A_r$, define the reduced coordinate
map $z(s) = V_r^\top\vphi(s)\in\R^r$ and the projected Hamiltonian coefficient
$c_r = V_r V_r^\top c\in\R^d$. The rank-$r$ energy approximation is
$E_r(s) = c_r^\top \vphi(s)$, and at any parameter vector $\vthet$ the reduced
surrogate objective is $\Fhat_r(\vthet) = c_r^\top \bar\vphi(\vthet)$, where
$\bar\vphi(\vthet) = \sum_t w_t\vphi(s^{(t)})$ is the weighted empirical mean
feature.

The central object for both per-sample and objective error control is the
\emph{residual fraction}
\begin{equation}
  \kappa_r \;:=\; \frac{\norm{(I - V_rV_r^\top)\,c}_2}{\norm{c}_2}
  \;\in\; [0,1],
  \label{eq:kappa-def}
\end{equation}
which measures how much of $c$ lies outside the retained subspace
$\range(V_r)$. Note that $\kappa_r = 0$ iff $c \in \range(V_r)$ (in which case
$E_r \equiv E$ and $\Fhat_r \equiv \Fhat$), and $\kappa_r = 1$ iff
$c \perp \range(V_r)$. The two bounds below follow from $\kappa_r$ via the
Cauchy--Schwarz inequality applied with the residual on $c$ rather than on
$\vphi$.

\begin{proposition}[Master inequality]
\label{prop:master}
For every sample $s$ and every parameter vector $\vthet$,
\begin{align}
  \abs{E(s) - E_r(s)}
    &\;\leq\; \kappa_r\,\norm{c}_2\,\norm{\vphi(s)}_2,
    \label{eq:trunc} \\[2pt]
  \abs{\Fhat(\vthet) - \Fhat_r(\vthet)}
    &\;\leq\; \kappa_r\,\norm{c}_2\,\norm{\bar\vphi(\vthet)}_2.
    \label{eq:obj}
\end{align}
\end{proposition}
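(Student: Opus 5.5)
The plan is to reduce both inequalities to a single Cauchy--Schwarz step applied to the residual of $c$. Write $P_r := V_rV_r^\top$; this is the orthogonal projector onto $\range(V_r)$, so $I-P_r$ is also an orthogonal projector and $c - c_r = (I-P_r)c$. For a sample $s$, the linear decomposition $E(s)=c^\top\vphi(s)$ and the definition $E_r(s)=c_r^\top\vphi(s)$ give
\[
  E(s)-E_r(s) \;=\; \bigl((I-P_r)c\bigr)^\top \vphi(s).
\]
Cauchy--Schwarz then yields $\abs{E(s)-E_r(s)} \le \norm{(I-P_r)c}_2\,\norm{\vphi(s)}_2$. Substituting $\norm{(I-P_r)c}_2=\kappa_r\norm{c}_2$ from \eqref{eq:kappa-def} gives \eqref{eq:trunc}.

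For \eqref{eq:obj}, I take the unreduced objective to be $\Fhat(\vthet)=c^\top\bar\vphi(\vthet)=\sum_t w_t E(s^{(t)})$, the weighted empirical energy, which matches the surrogate $\Fhat_r(\vthet)=c_r^\top\bar\vphi(\vthet)$. By linearity,
\[
  \Fhat(\vthet)-\Fhat_r(\vthet) \;=\; \bigl((I-P_r)c\bigr)^\top\bar\vphi(\vthet).
\]
The same Cauchy--Schwarz step gives the bound with $\norm{\bar\vphi(\vthet)}_2$. I will apply Cauchy--Schwarz to the mean feature directly, not average the per-sample bound \eqref{eq:trunc}. Averaging would only give the weaker factor $\sum_t w_t\norm{\vphi(s^{(t)})}_2$, which by the triangle inequality is at least $\norm{\bar\vphi}_2$.

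There is no real obstacle here. The only points needing care are, first, to state explicitly that $\Fhat$ denotes $c^\top\bar\vphi$, since the paper does not define it before this point, and second, to note that $V_r$ has orthonormal columns, so that $P_r$ is a genuine orthogonal projector and $\kappa_r\in[0,1]$. Edge cases such as $c=0$ are trivial, since both sides vanish.
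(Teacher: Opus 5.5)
Your proposal is correct and follows the paper's proof essentially verbatim: write $E(s)-E_r(s)=\bigl((I-P_r)c\bigr)^\top\vphi(s)$, apply Cauchy--Schwarz with the residual on $c$, identify $\norm{(I-P_r)c}_2=\kappa_r\norm{c}_2$, and repeat the argument directly with $\bar\vphi(\vthet)$ in place of $\vphi(s)$. Your explicit reading $\Fhat(\vthet)=c^\top\bar\vphi(\vthet)$ matches how the paper uses $\Fhat$ implicitly, for instance in its remark that $\kappa_r=0$ gives $\Fhat_r\equiv\Fhat$.
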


\begin{proof}
Write $P_r = V_r V_r^\top$ for the projector onto the retained subspace and
$P_r^\perp = I - P_r$ for the complementary projector. Both are symmetric, so
\[
  E(s) - E_r(s) \;=\; c^\top\vphi(s) - c_r^\top\vphi(s)
  \;=\; \big(P_r^\perp c\big)^\top \vphi(s).
\]
Cauchy--Schwarz then gives
$\abs{E(s) - E_r(s)} \leq \norm{P_r^\perp c}_2\,\norm{\vphi(s)}_2$,
and~\eqref{eq:kappa-def} identifies $\norm{P_r^\perp c}_2 = \kappa_r\norm{c}_2$,
yielding~\eqref{eq:trunc}. The bound~\eqref{eq:obj} follows by the same
argument with $\vphi(s)$ replaced by $\bar\vphi(\vthet)$, using linearity of
$\bar\vphi(\vthet) = \sum_t w_t\vphi(s^{(t)})$.
\end{proof}

\subsection{Argmin Preservation}\label{sec:argmin}
The pointwise bound~\eqref{eq:obj} translates into control over the gap
between the minimizers of $\Fhat$ and $\Fhat_r$ provided the empirical mean
feature is bounded over the parameter set.

\begin{corollary}[Argmin-shift bound]
\label{cor:argmin}
Let $\Theta\subseteq\R^{d_\theta}$ be a parameter set on which both $\Fhat$
and $\Fhat_r$ attain their minima, and suppose
$\sup_{\vthet\in\Theta}\norm{\bar\vphi(\vthet)}_2 \leq B$ for some finite $B>0$.
Let
$\vthet^\star\in\argmin_{\vthet\in\Theta}\Fhat(\vthet)$ and
$\vthet_r^\star\in\argmin_{\vthet\in\Theta}\Fhat_r(\vthet)$. Then
\begin{equation}
  \Fhat(\vthet_r^\star) \,-\, \Fhat(\vthet^\star)
  \;\leq\; 2\,\kappa_r\,\norm{c}_2\,B.
  \label{eq:argmin-shift}
\end{equation}
\end{corollary}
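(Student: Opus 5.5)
The plan is the standard three-term telescoping argument for comparing a minimizer of a surrogate with a minimizer of the true objective. The only ingredient is the uniform pointwise bound \eqref{eq:obj} of Proposition~\ref{prop:master}. Combined with the hypothesis $\sup_{\vthet\in\Theta}\norm{\bar\vphi(\vthet)}_2\leq B$, it gives
\[
  \sup_{\vthet\in\Theta}\abs{\Fhat(\vthet)-\Fhat_r(\vthet)} \;\leq\; \eta := \kappa_r\norm{c}_2 B .
\]
Once this uniform bound is in hand, the minimizer comparison is purely order-theoretic and uses nothing about the structure of $\Fhat$ or $\Fhat_r$.

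Concretely, I would decompose
\[
  \Fhat(\vthet_r^\star)-\Fhat(\vthet^\star)
  = \bigl[\Fhat(\vthet_r^\star)-\Fhat_r(\vthet_r^\star)\bigr]
  + \bigl[\Fhat_r(\vthet_r^\star)-\Fhat_r(\vthet^\star)\bigr]
  + \bigl[\Fhat_r(\vthet^\star)-\Fhat(\vthet^\star)\bigr].
\]
The first and third brackets are each at most $\eta$, by the uniform bound evaluated at $\vthet_r^\star\in\Theta$ and at $\vthet^\star\in\Theta$ respectively. The middle bracket is $\leq 0$ because $\vthet_r^\star$ minimizes $\Fhat_r$ over $\Theta$ and $\vthet^\star\in\Theta$. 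Summing the three gives the bound $2\eta = 2\kappa_r\norm{c}_2 B$, which is \eqref{eq:argmin-shift}.

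There is no genuine obstacle here. The one point needing care is that \eqref{eq:obj} must be applied at both minimizers, and both lie in $\Theta$; the existence of minimizers and the finiteness of $B$ are supplied by the hypotheses.

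I would add a remark connecting this to the abstract's $\sqrt{d}$ form. Each feature vector $\vphi(s)$ has $d$ entries in $\{\pm1\}$ (constant, $s_i$, $s_is_j$). Since the weights are nonnegative and sum to one, convexity of the norm gives $\norm{\bar\vphi(\vthet)}_2\leq\sum_t w_t\norm{\vphi(s^{(t)})}_2=\sqrt d$. So one may take $B=\sqrt d$, recovering the constant $2\kappa_r\norm{c}_2\sqrt d$.
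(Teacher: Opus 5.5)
Your proposal is correct and matches the paper's proof: the same three-term decomposition, with the middle term nonpositive by minimality of $\vthet_r^\star$ and the outer terms each bounded by \eqref{eq:obj} combined with $\norm{\bar\vphi(\vthet)}_2\leq B$. Your closing remark that $B=\sqrt{d}$ follows from convexity of the norm also matches the paper's comment after the corollary.
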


\begin{proof}
Decompose the gap by adding and subtracting $\Fhat_r(\vthet_r^\star)$ and
$\Fhat_r(\vthet^\star)$:
\[
\begin{aligned}
  \Fhat(\vthet_r^\star) - \Fhat(\vthet^\star) &= \underbrace{[\Fhat(\vthet_r^\star) - \Fhat_r(\vthet_r^\star)]}_{\mathrm{(I)}} + \underbrace{[\Fhat_r(\vthet_r^\star) - \Fhat_r(\vthet^\star)]}_{\mathrm{(II)}} \\
  &\quad + \underbrace{[\Fhat_r(\vthet^\star) - \Fhat(\vthet^\star)]}_{\mathrm{(III)}}
\end{aligned}
\]
Term~(II) is non-positive because $\vthet_r^\star$ minimizes $\Fhat_r$ over
$\Theta$ and $\vthet^\star\in\Theta$. Terms~(I) and~(III) are each bounded by
the master inequality~\eqref{eq:obj} combined with the uniform bound
$\norm{\bar\vphi(\vthet)}_2\leq B$:
\[
  \abs{(\mathrm{I})}\,,\,\abs{(\mathrm{III})}\;\leq\;\kappa_r\,\norm{c}_2\,B.
\]
Adding the three terms yields~\eqref{eq:argmin-shift}.
\end{proof}

The assumption $\sup_{\vthet \in \Theta} \norm{\bar\vphi(\vthet)}_2 \leq B$
is trivially satisfied for Ising features with $B = \sqrt{d}$, since each
$\phi_k(s) \in \{-1,+1\}$ implies $\norm{\phi(s)}_2 = \sqrt{d}$
deterministically and the empirical mean is a convex
combination~\cite{farhi2014}. The corollary reduces argmin preservation to bounding $\kappa_r$. 


\subsection{Trust-Region Classical Loop}

We adopt a trust-region framework~\cite{conn2000} for the classical parameter update.  The initial parameter vector $\vthet_0$ may be drawn randomly from a bounded domain or selected via coarse grid search over the HGLE surrogate landscape; the surrogate's smoothness makes it robust to initialization choice.  The rank-$r$ surrogate $\Fhat_r$ is smooth and low-dimensional: fitting a quadratic in $r$ dimensions requires only $O(r^2)$ probe evaluations on the compressed $\Atil$ rather than the full~$A$.  Unlike gradient-free methods~\cite{powell1994, nelder1965}, the quadratic model captures curvature; unlike gradient-based methods~\cite{liu1989}, it does not require reliable gradient estimates, and its adaptive radius $\rho_k$ safeguards against large steps in poorly characterized regions.

\begin{algorithm}[t]
\caption{Reduced Classical Loop for Estimating $(\vgam,\vbet)$}
\label{alg:loop}
\begin{algorithmic}[1]
\Require Initial $\vthet_0$ (random or grid-sampled),
         trust-region radii $\{\rho_k\}$,
         probe displacements $\{\Delta^{(j)}\}_{j=0}^q$
\Ensure Estimated parameter vector $\hat\vthet$
\For{$k = 0, 1, 2, \ldots$ until convergence}
  \For{each probe $j = 0,\ldots,q$}
    \State Set $\vthet_k^{(j)} = \vthet_k + \Delta^{(j)}$
    \State Run HGLE at $\vthet_k^{(j)}$, obtain $\Atil^{(j)}$
    \State Compute reduced basis $\Vtil_r^{(j)}$ from $\Atil^{(j)}$
    \State Estimate reduced objective $\Fhat_r(\vthet_k^{(j)})$
  \EndFor
  \State Fit local quadratic:
         $m_k(\Delta)=a_k+g_k^\top\Delta+\tfrac{1}{2}\Delta^\top B_k\Delta$
  \State Trust-region:
         $\Delta^*_k = \arg\min_{\norm{\Delta}_2\leq\rho_k} m_k(\Delta)$
  \State Set $\vthet_{k+1} = \vthet_k + \Delta^*_k$
  \If{$\norm{\Delta^*_k}_2$ and objective below tolerance}
    \State \textbf{break}
  \EndIf
\EndFor
\State \Return $\hat\vthet = \vthet_k$
\end{algorithmic}
\end{algorithm}


\section{Results}
\label{sec:results}

All benchmark instances are drawn from HamLib~\cite{sawaya2024}, which provides standardized Hamiltonian encodings for reproducible benchmarking.  The sampled graphs span edge densities from $0.33$ to $0.85$ (median $0.54$, interquartile range $0.46$--$0.67$), covering sparse, moderate, and dense regimes.  All experiments use the Qiskit AerSimulator~\cite{javadi2024}: small-scale benchmarks (${\leq}18$ qubits) employ statevector simulation with $150$ shots per energy evaluation to test HGLE's robustness under high sampling variance, while the $40$-node sparsification experiments use matrix-product-state (MPS) simulation~\cite{vidal2003} with $2000$ shots.

Figure~\ref{fig:search} compares optimization trajectories on a representative Max-Cut instance at depth $p{=}1$.  With HGLE (red stars), the optimizer rapidly converges to the nearest basin of attraction.  Without HGLE (blue squares), it wanders through flat regions before reaching a comparable basin.

\begin{figure}[t]
  \centering
  \includegraphics[width=\columnwidth]{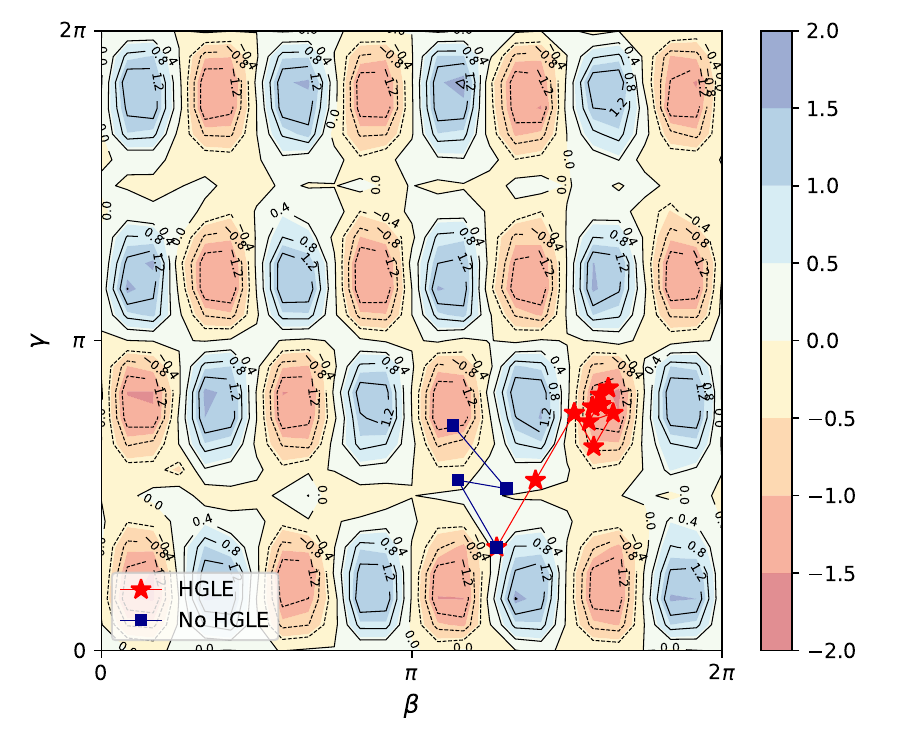}
  \caption{%
    \textbf{Optimization trajectories on the QAOA energy landscape.}
    Contours show $\langle \HC \rangle(\beta,\gamma)$ for a Max-Cut instance
    at depth $p{=}1$.  Red stars: HGLE-enabled iterates converge quickly within a deep basin.  Blue squares: randomly initialized search wanders through flat and mixed-sign regions before reaching a minimum.}
  \label{fig:search}
\end{figure}

Figure~\ref{fig:param_scan} shows the sensitivity of the HGLE surrogate to regularization strength $\tau$ and retained rank $r$ on a $7$-vertex Max-Cut instance ($p{=}1$).  At $r\geq 100$, the landscape is well-resolved and stable across all $\tau$ values, with close agreement between $r=500$ and $r=100$ confirming near-full spectral fidelity at $5{\times}$ reduction.  Increasing $\tau$ from $0.1$ to $0.3$ smooths the landscape slightly, raising the energy floor by $0.5$--$1.0$.  At $r=5$, the surrogate degrades: energies collapse to $\langle \HC \rangle \approx -1.5$ and contour structure is lost.  Moderate $r$ suffices; aggressive compression sacrifices landscape fidelity.  For MIS, the surrogate exhibits qualitatively similar $\tau$-sensitivity but requires moderately higher $r$ to resolve the more rugged basin structure induced by penalty couplings.

\begin{figure}[!htbp]
  \centering
  \includegraphics[width=0.95\linewidth]{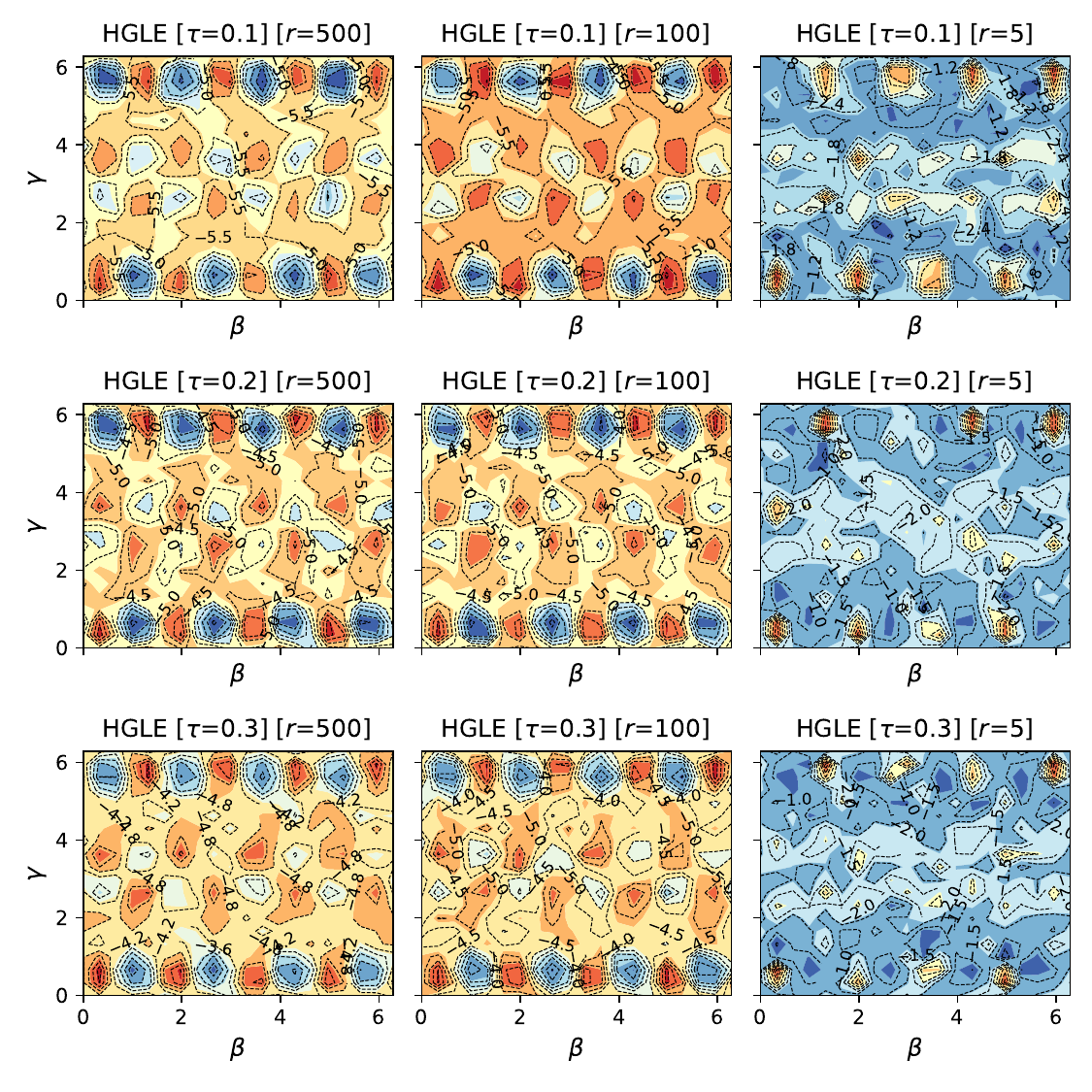}
  \caption{%
    \textbf{Sensitivity of the HGLE energy landscape to regularization $\tau$
    and retained rank $r$.}
    Each panel shows the predicted energy $\Fhat_r(\beta,\gamma)$ for a
    $7$-vertex Max-Cut instance at depth $p{=}1$.  Rows vary the regularization strength ($\tau=0.1,0.2,0.3$); columns vary the retained rank ($r=500,100,5$).  At $r\geq 100$, the landscape is well-resolved and largely insensitive to $\tau$, confirming robust compression.  At $r=5$ the surrogate degrades significantly, with shallower energy values and loss of contour structure.}
  \label{fig:param_scan}
\end{figure}

Figure~\ref{fig:enrichment} tracks sample enrichment on an $8$-node Max-Cut instance, showing violin plots of $\langle \HC \rangle$ at iterations $1$, $10$, and $30$ for No~HGLE, Rank-$5$, and Rank-$25$.  Since QAOA minimizes $\langle \HC \rangle$, more negative values indicate better solutions.  At iteration~$1$ all three distributions overlap.  By iteration~$30$, Rank-$25$ concentrates tightly around the most favorable (most negative) energies, while the No~HGLE baseline retains a broader distribution with samples extending toward less favorable (less negative) energies.  HGLE thus acts as a sample enrichment mechanism, with the degree of enrichment controlled by~$r$.

\begin{figure}[!htbp]
  \centering
  \includegraphics[width=0.98\columnwidth]{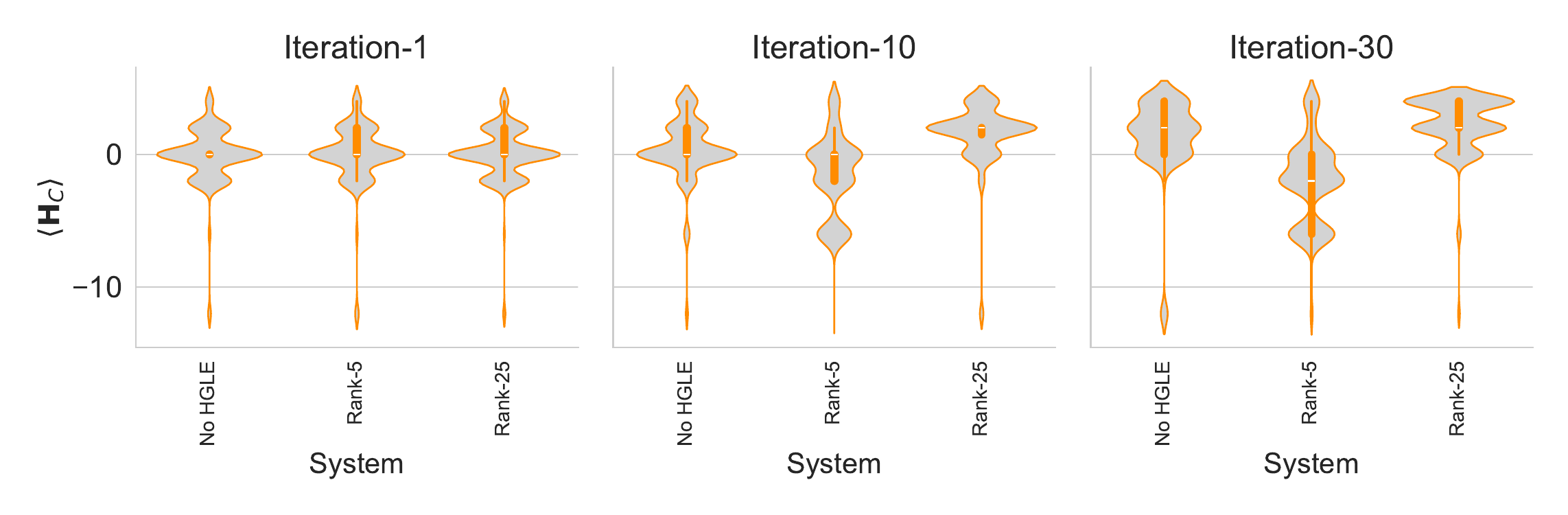}
  \vspace{-8pt}
  \caption{%
    \textbf{Sample enrichment ($8$-node Max-Cut).}
    Violin plots of $\langle \HC \rangle$ at iterations $1$, $10$, $30$.
    Rank-$25$ concentrates near optimal (most negative) energies by iteration~$30$;
    No~HGLE retains a broader distribution extending toward less favorable energies.}
  \label{fig:enrichment}
\end{figure}

\subsection{Approximation Ratio on HamLib Benchmarks}
\label{sec:approx_ratio}

To quantify the end-to-end solution quality of HGLE-assisted QAOA, we evaluate approximation ratios on Max-Cut (5--18 vertices) and MIS (6--16 vertices) instances drawn from HamLib~\cite{sawaya2024}.  The approximation ratio is defined as the ratio of the QAOA-obtained objective value to the known classical optimum for each instance, averaged across all benchmark graphs.  Table~\ref{tab:approx_ratio} reports results for three classical optimizers (L-BFGS-B, COBYLA, and Trust Region), each run with and without HGLE warm-starting.

\begin{table}[t]
  \centering
  \caption{Approximation ratios on HamLib benchmark instances (Max-Cut: 5--18 vertices; MIS: 6--16 vertices).  Each entry is averaged over all instances in the respective problem class.}
  \label{tab:approx_ratio}
  \renewcommand{\arraystretch}{1.15}
  \begin{tabular}{l cc cc}
    \toprule
    & \multicolumn{2}{c}{\textbf{Max-Cut}} & \multicolumn{2}{c}{\textbf{MIS}} \\
    \cmidrule(lr){2-3} \cmidrule(lr){4-5}
    \textbf{Optimizer} & w/o HGLE & w/ HGLE & w/o HGLE & w/ HGLE \\
    \midrule
    L-BFGS-B      & 0.9831 & \textbf{1.00} & 0.7118 & \textbf{0.9944} \\
    COBYLA         & 0.9911 & \textbf{0.9981}  & 0.3625 & \textbf{1.00} \\
    Trust Region   & 0.9843 & \textbf{1.00} & 0.7620 & \textbf{0.9833} \\
    \bottomrule
  \end{tabular}
\end{table}

For Max-Cut, all optimizers already attain high ratios without HGLE ($0.983$--$0.991$); with HGLE, L-BFGS-B and Trust Region reach $1.00$ and COBYLA improves to $0.9981$.

The impact is most pronounced on MIS, where the penalty-laden landscape makes optimization far harder.  Without HGLE, COBYLA degrades to $0.3625$, while L-BFGS-B ($0.7118$) and Trust Region ($0.7620$) remain well below optimal.  With HGLE, COBYLA reaches $1.00$, L-BFGS-B improves to $0.9944$, and Trust Region to $0.9833$: a $2.8{\times}$ improvement for COBYLA, and absolute gains of $28\%$ for L-BFGS-B and $22\%$ for Trust Region.

Two conclusions follow at these scales.  First, HGLE \emph{decouples solution quality from optimizer choice}: the cross-optimizer spread drops from $0.40$ (MIS without HGLE) to ${\leq}0.017$.  Second, the benefit \emph{scales with problem difficulty}: the more non-convex the landscape, the larger the gain.

\paragraph{Scope.}
Each entry averages over ${>}150$ HamLib instances at ${\leq}18$ vertices, a classically solvable regime where HGLE reliably locates favorable basins.  These results demonstrate that HGLE \emph{substantially reduces optimizer-dependent variance}, not that it guarantees optimality at arbitrary scale.  The $40$-node experiments (Tables~\ref{tab:40_nodes_depth}--\ref{tab:maxcut_40node}), with HGLE ratios of $0.80$--$0.92$, confirm graceful degradation beyond this regime.

\subsection{Scaling of Solution Quality and Circuit Cost}
\label{sec:scaling}

Figure~\ref{fig:maxcut_scaling}(a) disaggregates Table~\ref{tab:approx_ratio} by qubit count (5--18) for Max-Cut.  With HGLE, all optimizers hold at or near $1.00$ across the full range.  Without HGLE, performance degrades with size: COBYLA drops to ${\sim}0.983$, L-BFGS-B to ${\sim}0.993$, and Trust Region to ${\sim}0.962$ at $18$ qubits.  The widening gap confirms that HGLE's benefit \emph{grows} with problem size.

Figure~\ref{fig:maxcut_scaling}(b) reports circuit evaluations to convergence.  Evaluation budgets differ across optimizers (COBYLA: $45$--$70$; L-BFGS-B: $150$--$500$; Trust Region: $500$--$2000{+}$).  For COBYLA and L-BFGS-B, HGLE leaves the budget roughly unchanged.  For Trust Region, HGLE roughly halves evaluations at $18$ qubits (${\sim}1000$ vs.\ ${\sim}2000$).  HGLE thus delivers better solutions without increasing the circuit budget, and for expensive optimizers like Trust Region, while actually reducing it.

\begin{figure}[htp]
  \centering
  \begin{subfigure}[t]{0.48\textwidth}
    \centering
    \includegraphics[width=\linewidth]{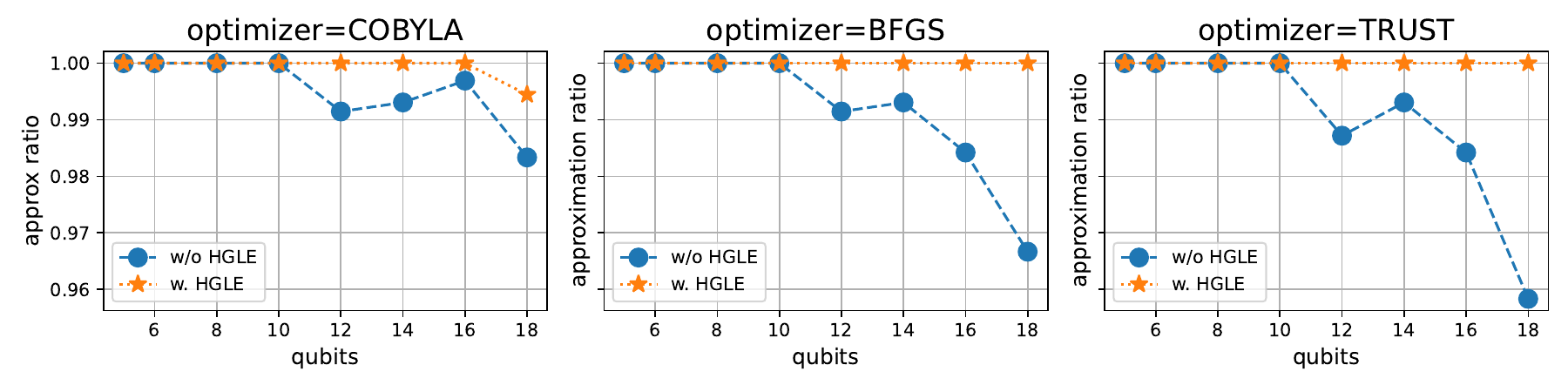}
    \vspace{-14pt}
    \caption{Approximation ratio vs.\ qubit count.  HGLE maintains near-optimal
    ratios; the baseline degrades most for Trust Region.}
    \label{fig:maxcut_perf}
  \end{subfigure}\\[-2pt]
  \begin{subfigure}[t]{0.48\textwidth}
    \centering
    \includegraphics[width=\linewidth]{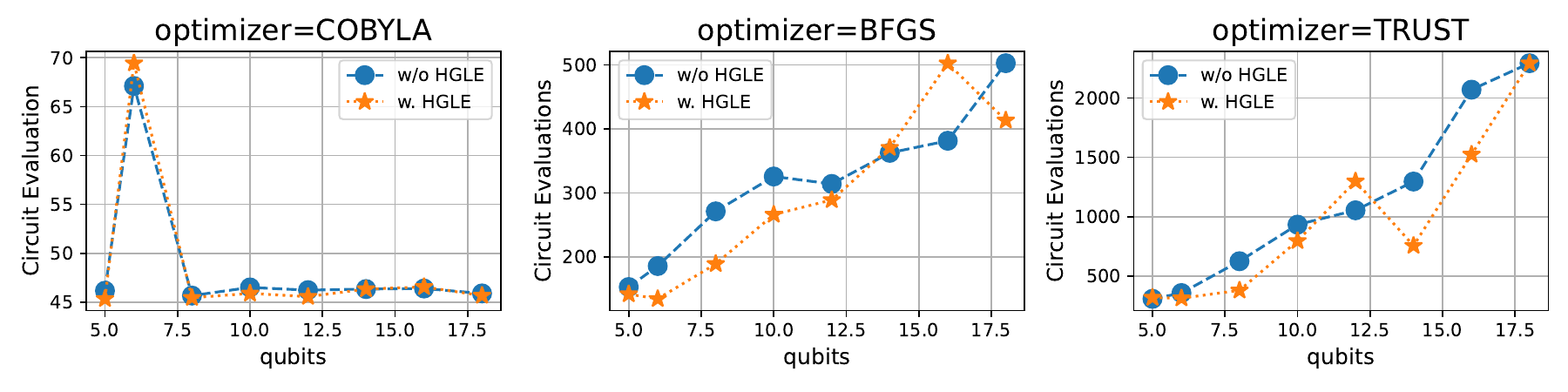}
    \vspace{-14pt}
    \caption{Circuit evaluations vs.\ qubit count.  Trust Region benefits most,
    with HGLE roughly halving evaluations at $18$ qubits.}
    \label{fig:maxcut_evals}
  \end{subfigure}
  \vspace{-6pt}
  \caption{%
    \textbf{Max-Cut scaling on HamLib (5--18 qubits).}
    (a)~HGLE preserves near-perfect approximation ratios as qubit count grows.
    (b)~Circuit budget is unchanged or reduced with HGLE.}
  \label{fig:maxcut_scaling}
\end{figure}

\subsection{MIS Scaling Behavior}
\label{sec:mis_scaling}

Figure~\ref{fig:mis_scaling} disaggregates the MIS results by qubit count (6--16).

\paragraph{Approximation ratio (Figure~\ref{fig:mis_scaling}a).}
Without HGLE, all optimizers degrade severely with size: COBYLA falls from
${\sim}1.0$ at $6$ qubits to ${\sim}0.2$ at $16$; L-BFGS-B and Trust Region each
drop to ${\sim}0.5$.  With HGLE, all three remain at or above $0.98$ across
the full range.  Compared with the gradual Max-Cut degradation
(Figure~\ref{fig:maxcut_scaling}a), the MIS collapse is catastrophic,
consistent with the difficulty-dependent trend observed in Table~\ref{tab:approx_ratio}.

\paragraph{Circuit evaluations (Figure~\ref{fig:mis_scaling}b).}
HGLE does not increase the circuit budget.  COBYLA sees a consistent
$10$--$15\%$ reduction (${\sim}31$--$33$ vs.\ ${\sim}34$--$38$ evaluations);
L-BFGS-B and Trust Region remain at parity or slightly below baseline.  As with
Max-Cut, the mechanism is initialization quality: HGLE places the optimizer
near a favorable basin, achieving near-perfect solutions with the native
evaluation budget.

\begin{figure}[htp]
  \centering
  \begin{subfigure}[t]{0.95\linewidth}
    \centering
    \includegraphics[width=\linewidth]{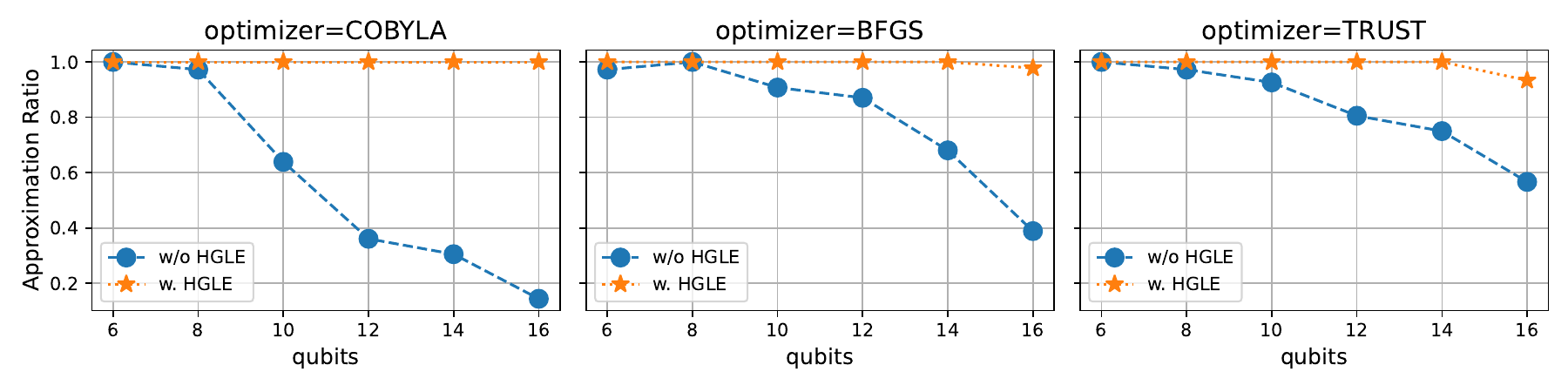}
    \vspace{-14pt}
    \caption{Approximation ratio vs.\ qubit count.  Without HGLE, all optimizers
    collapse below $0.5$ by $16$ qubits; with HGLE, ratios remain ${\geq}0.98$.}
    \label{fig:mis_perf}
  \end{subfigure}\\[-2pt]
  \begin{subfigure}[t]{0.95\linewidth}
    \centering
    \includegraphics[width=\linewidth]{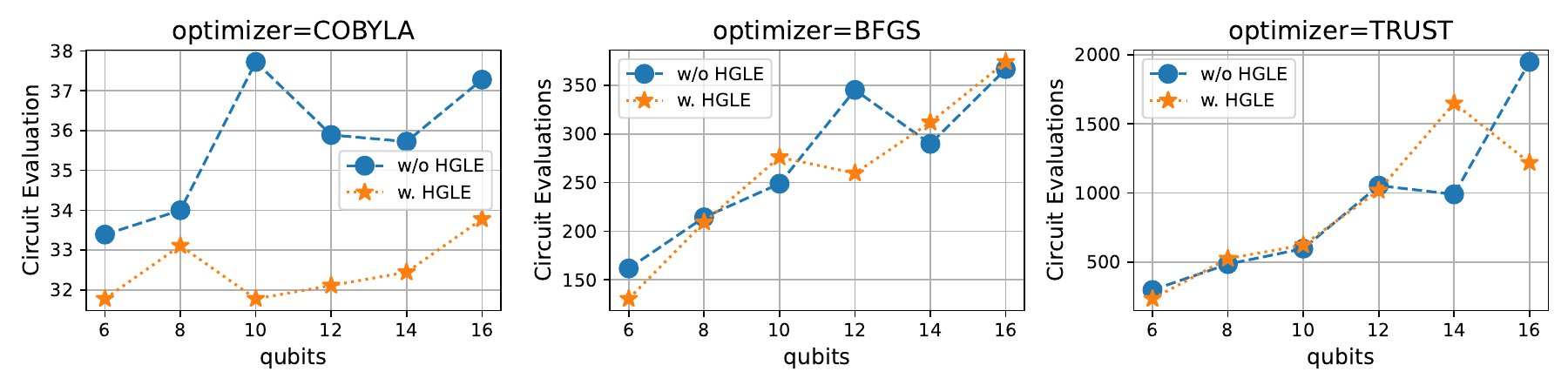}
    \vspace{-14pt}
    \caption{Circuit evaluations vs.\ qubit count.  Evaluation budgets are
    comparable or slightly lower with HGLE across all three optimizers.}
    \label{fig:mis_evals}
  \end{subfigure}
  \vspace{-6pt}
  \caption{%
    \textbf{MIS scaling on HamLib (6--16 qubits).}
    (a)~HGLE prevents the catastrophic quality collapse that all three
    optimizers exhibit without warm-starting.
    (b)~The circuit budget is unchanged or modestly reduced, confirming that
    the gain comes from initialization quality rather than additional search.}
  \label{fig:mis_scaling}
\end{figure}

\subsection{Depth Scaling Behavior}
\label{sec:depth_scaling}

Figure~\ref{fig:depth_scaling} varies circuit depth ($p=2,4,6$), averaging over instances with more than $10$ qubits.  For Max-Cut, HGLE-assisted runs stay above $0.995$ at all depths, while the baseline degrades (Trust Region drops from ${\sim}0.995$ at $p{=}2$ to ${\sim}0.97$ at $p{=}6$).  The effect is sharper on MIS: HGLE maintains ${\geq}0.98$ at every depth, whereas baseline COBYLA achieves only ${\sim}0.3$--$0.7$ and L-BFGS-B/Trust Region plateau around $0.7$--$0.85$ even at $p{=}6$.  Increasing depth alone does not compensate for poor initialization.  HGLE's benefit is thus robust to both the qubit-count and depth axes of QAOA scaling.

\begin{figure}[htp]
  \centering
  \begin{subfigure}[t]{0.95\linewidth}
    \centering
    \includegraphics[width=\linewidth]{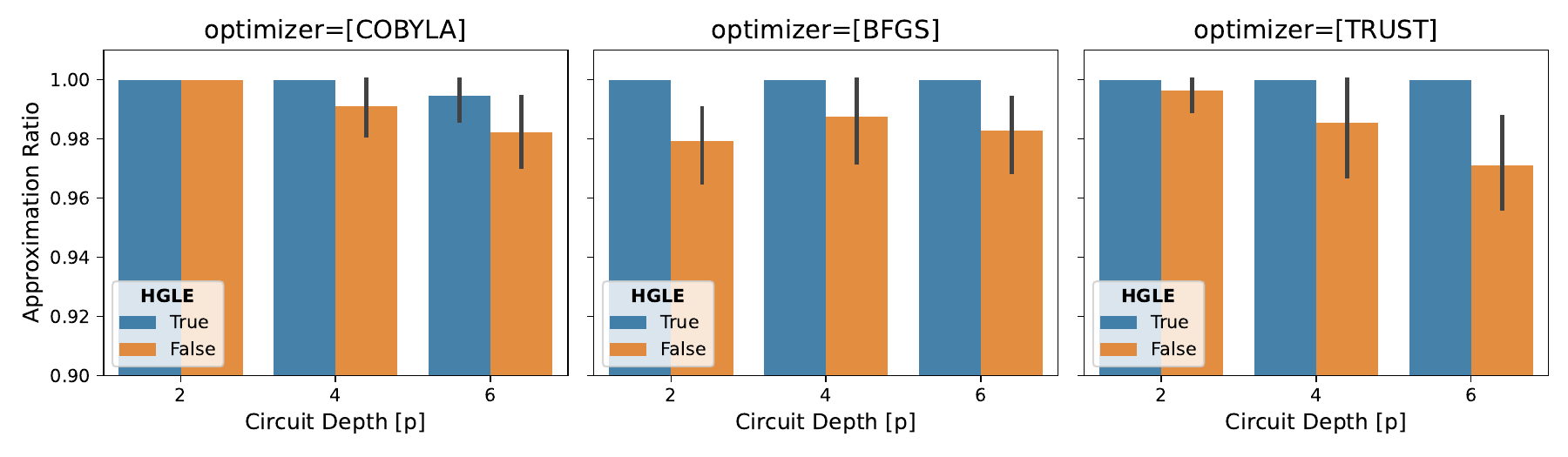}
    \vspace{-14pt}
    \caption{Max-Cut: HGLE (blue) holds near $1.0$ at all depths; without HGLE
    (orange), quality degrades slightly as $p$ grows.}
    \label{fig:maxcut_depth}
  \end{subfigure}\\[-2pt]
  \begin{subfigure}[t]{0.95\linewidth}
    \centering
    \includegraphics[width=\linewidth]{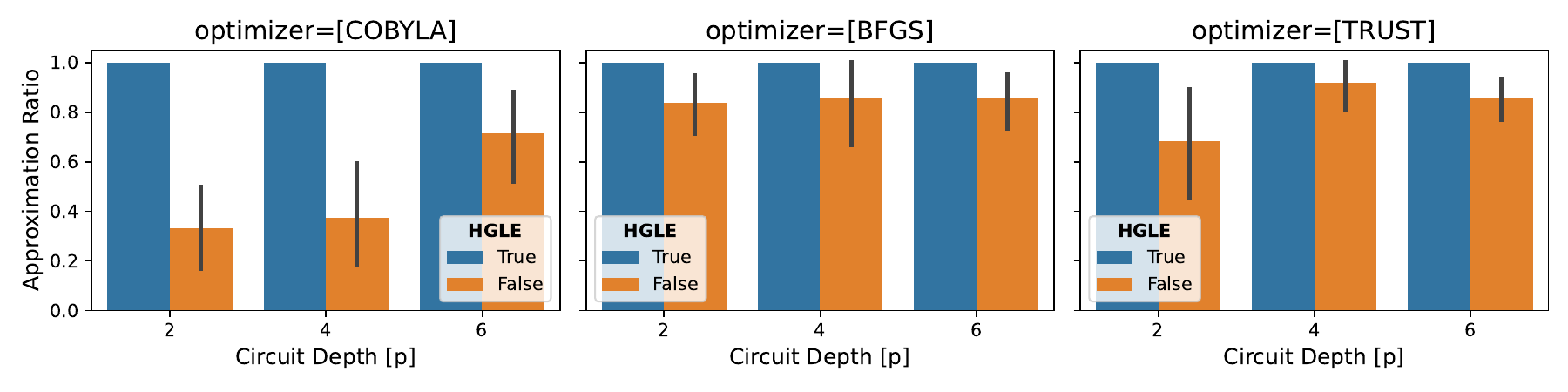}
    \vspace{-14pt}
    \caption{MIS: without HGLE, all optimizers remain far below optimal even at
    $p{=}6$; HGLE maintains ${\geq}0.98$ throughout.}
    \label{fig:mis_depth}
  \end{subfigure}
  \vspace{-6pt}
  \caption{%
    \textbf{Approximation ratio vs.\ circuit depth ($p=2,4,6$), averaged over instances with $>10$ qubits.}
    (a)~Max-Cut shows a small but consistent HGLE advantage that widens with
    depth.
    (b)~MIS shows that increasing depth alone cannot rescue uninformed
    initialization; HGLE remains essential.}
  \label{fig:depth_scaling}
\end{figure}

\subsection{Circuit Depth Reduction via Sparsification}
\label{sec:sparsification_results}

To evaluate the network sparsification strategy introduced in Section~\ref{sec:hgle}, we apply spectral reordering followed by distance-based sparsification to a $40$-node, $120$-edge graph at depth $p{=}5$.  Table~\ref{tab:40_nodes_depth} reports circuit depths for varying adjacency bandwidth $k$ under two compilation targets: an ideal simulator and the FakeMarrakesh backend (which imposes hardware connectivity constraints and additional SWAP routing overhead).

\begin{table}[htbp]
\centering
\caption{Sparsification with $p=5$, 40 nodes, 120 edges system}
\label{tab:40_nodes_depth}
\begin{tabular}{lcc}
\toprule
 & \multicolumn{2}{c}{Depth after reordering and sparsification} \\
\cmidrule(lr){2-3}
 & Depth -- Simulator & Depth -- FakeMarrakesh \\
\midrule
$k$ = 2 & 30  & 54  \\
$k$ = 3 & 50  & 107 \\
$k$ = 4 & 60  & 224 \\
$k$ = 5 & 60  & 201 \\
$k$ = 6 & 70  & 243 \\
$k$ = 7 & 90  & 370 \\
$k$ = 8 & 90  & 413 \\
$k$ = 9 & 110 & 630 \\
\midrule
No sparsification & 110 & 653 \\
\bottomrule
\end{tabular}
\end{table}

At the most aggressive sparsification ($k{=}2$), circuit depth drops to $30$ on the simulator and $54$ on FakeMarrakesh, reductions of $73\%$ and $92\%$ respectively compared with the unsparsified circuit ($110$ and $653$).  Even at moderate bandwidth ($k{=}5$), depth is roughly halved on the simulator ($60$ vs.\ $110$) and reduced by $69\%$ on FakeMarrakesh ($201$ vs.\ $653$).  The hardware-mapped depths generally grow faster than the simulator depths because each additional long-range coupling incurs multiple SWAP gates under the device's limited qubit connectivity; occasional non-monotonicities (e.g., $k{=}5$ vs.\ $k{=}4$ on FakeMarrakesh) reflect the heuristic nature of the SWAP routing pass.

These results confirm that the Fiedler-based reordering concentrates the most structurally important couplings at short index distances: aggressive truncation is possible with modest depth overhead.  The sparsification--depth tradeoff is especially favorable on connectivity-constrained backends, where the routing penalty for long-range interactions dominates total circuit cost.

\subsection{Solution Quality Under Sparsification}
\label{sec:sparsification_quality}

While sparsification reduces circuit depth, the key question is whether it preserves solution quality.  Table~\ref{tab:maxcut_40node} compares Vanilla QAOA against HGLE-assisted QAOA ($\tau{=}1.0$, $r{=}50$) on the same $40$-node, $120$-edge Max-Cut instance (optimal cut $= 94$) across QAOA depths $p \in \{2,3,4,5\}$ and adjacency bandwidths $k \in \{2,\ldots,6\}$.

\begin{table}[t]
  \centering
  \caption{Approximation ratios on a $40$-node Max-Cut instance (optimal cut $= 94$, $120$ edges) under graph sparsification.  $k$ denotes adjacency bandwidth after Fiedler reordering.  HGLE uses $\tau{=}1.0$, $r{=}50$.  All runs: COBYLA, $2000$ shots.}
  \label{tab:maxcut_40node}
  \renewcommand{\arraystretch}{1.12}
  \setlength{\tabcolsep}{4pt}
  \begin{tabular}{cc cc cc}
    \toprule
    & & \multicolumn{2}{c}{\textbf{w/o HGLE}} & \multicolumn{2}{c}{\textbf{w. HGLE}} \\
    \cmidrule(lr){3-4} \cmidrule(lr){5-6}
    $p$ & $k$ & Best AR & Mean AR & Best AR & Mean AR \\
    \midrule
    2 & 2 & 0.830 & 0.794 & \textbf{0.872} & \textbf{0.862} \\
      & 3 & 0.872 & 0.830 & \textbf{0.894} & \textbf{0.872} \\
      & 4 & 0.809 & 0.770 & 0.809          & \textbf{0.777} \\
      & 5 & 0.766 & 0.757 & \textbf{0.872} & \textbf{0.843} \\
      & 6 & 0.851 & 0.826 & 0.851          & \textbf{0.838} \\
    \midrule
    3 & 2 & 0.851 & 0.834 & \textbf{0.894} & \textbf{0.855} \\
      & 3 & 0.872 & 0.857 & \textbf{0.894} & \textbf{0.862} \\
      & 4 & 0.851 & 0.834 & \textbf{0.915} & \textbf{0.860} \\
      & 5 & \textbf{0.894} & \textbf{0.832} & 0.830 & 0.809 \\
      & 6 & 0.851 & 0.830 & \textbf{0.872} & \textbf{0.840} \\
    \midrule
    4 & 2 & 0.830 & 0.794 & 0.830          & \textbf{0.798} \\
      & 3 & 0.872 & 0.838 & \textbf{0.894} & \textbf{0.864} \\
      & 4 & 0.766 & 0.740 & \textbf{0.809} & \textbf{0.796} \\
      & 5 & 0.766 & 0.740 & \textbf{0.894} & \textbf{0.840} \\
    \midrule
    5 & 2 & 0.809 & 0.800 & \textbf{0.872} & \textbf{0.843} \\
      & 3 & 0.809 & 0.774 & \textbf{0.851} & \textbf{0.815} \\
      & 4 & 0.766 & 0.740 & \textbf{0.894} & \textbf{0.864} \\
      & 5 & 0.745 & 0.730 & \textbf{0.851} & \textbf{0.830} \\
    \bottomrule
  \end{tabular}
\end{table}

HGLE matches or improves the best approximation ratio in $17$ of $18$ configurations, with gains of up to $+0.128$ at aggressive sparsification ($p{=}5$, $k{=}4$).  Vanilla QAOA degrades sharply at higher depths and tighter bandwidths, dropping to $0.745$ at ($p{=}5$, $k{=}5$), while HGLE remains above $0.80$ in all settings.  Mean approximation ratios follow the same trend, confirming that HGLE provides consistent, not merely occasional, improvement.  The single configuration where Vanilla outperforms HGLE ($p{=}3$, $k{=}5$) does not persist across neighboring bandwidth settings.  These results demonstrate that HGLE warm-starting is especially valuable in the sparsified regime, where the reduced Hamiltonian makes random initialization increasingly unreliable.


\section{Implementation Considerations}
\label{sec:impl}

\noindent\textbf{Approximate leverage scores.}
Computing the exact SVD of $A$ costs $O(Nd^2)$, which is expensive for large $N$.  In practice, one estimates $U_r$ and the leverage scores approximately using a randomized range finder or a sparse initial sketch.  Approximate scores suffice for Theorem~\ref{thm:lev}~\cite{drineas2012}.

\noindent\textbf{Weighting strategy.}
Uniform weights~\eqref{eq:wunif} treat the quantum sample cloud as a faithful empirical distribution.  Low-energy Boltzmann weights~\eqref{eq:wenergy} sharpen the focus on high-quality solutions, potentially at the cost of discarding exploratory samples.

\noindent\textbf{Rank and embedding dimension.}
The effective rank $r$ is selected as the number of singular values exceeding the threshold $\eta\sigma_1(A)$ for user-specified $\eta\in(0,1)$.  The embedding dimension $m$ is set per~\eqref{eq:m}; in practice, $m=4r$ provides robust norm preservation.  Figure~\ref{fig:param_scan} confirms that $r\geq 100$ retains near-full landscape fidelity across a range of regularization strengths.  An optional sparse oblivious stage (e.g., CountSketch) can further compress $\Atil$ for streaming applications.


\section{Discussion}
\label{sec:discussion}

\noindent\textbf{Landscape ruggedness and system dependence.}
HGLE addresses landscape variability (Figure~\ref{fig:landscape}) by operating in a rank-$r$ surrogate space that filters out noise-dominated directions responsible for spurious local structure.  Because leverage-score sampling adapts to the spectral content of~$A$, the compression is inherently responsive to instance-level landscape complexity without problem-specific tuning.


\noindent\textbf{Limitations and open questions.}
The bound~\eqref{eq:m} is worst-case; in practice, favorable spectral decay in $A$ may allow much smaller $m$.  Hamiltonians with high spectral rank (e.g., dense random graphs) may require larger $r$, so the effective compression ratio is system-dependent.  The interaction between the quantum sampling distribution and leverage scores is not yet fully understood.  The present benchmarks also lack a head-to-head comparison with INTERP/FOURIER~\cite{zhou2020}, parameter transfer~\cite{galda2021}, warm-starting from classical relaxations~\cite{egger2021}, or quantum-informed recursive optimization~\cite{dupont2024}; such a comparison is needed to isolate the contribution of leverage-score compression from the general benefit of informed initialization.

\noindent\textbf{Classical computational overhead.}
HGLE introduces per-iteration classical costs: constructing the weighted feature matrix $A$ requires $O(N \cdot d)$ operations, computing the (approximate) rank-$r$ SVD costs $O(N \cdot d \cdot r)$ with randomized methods~\cite{martinsson2020}, and leverage-score sampling adds $O(N)$.  While the subsequent trust-region iterations operate on the compressed $m \times d$ matrix ($m \ll N$), the upfront SVD cost may dominate when $N$ is modest, as in the ${\leq}18$-qubit statevector experiments where $N = 150$ shots.  The dominant costs (matrix construction and randomized SVD) involve dense linear algebra operations that are well-suited for GPU acceleration via libraries such as cuSOLVER~\cite{cusolver2024} or JAX~\cite{jax2018}, potentially reducing the classical overhead by 1--2 orders of magnitude on commodity hardware.  We did not report wall-clock comparisons in this work; such a study, spanning the crossover from simulation-dominated to classically-dominated regimes, is needed to characterize where HGLE's compression yields net runtime savings versus where it serves primarily as a noise filter that improves solution quality at comparable or higher classical cost.

\noindent\textbf{Graph sparsification and approximate Max-Cut.}
The $40$-node experiments (Table~\ref{tab:maxcut_40node}) reveal that HGLE and graph sparsification are complementary: sparsification reduces circuit depth by up to $92\%$ (Section~\ref{sec:sparsification_results}), while HGLE compensates for the information loss that sparsification introduces into the cost Hamiltonian.  Without HGLE, aggressive bandwidth truncation ($k{=}2$--$3$) causes Vanilla QAOA to drop below $0.83$ approximation ratio at higher depths, because the optimizer must navigate a landscape that is both high-dimensional and spectrally impoverished.  HGLE's leverage-score surrogate remains effective precisely in this regime: it identifies favorable basins from the compressed Hamiltonian's spectral structure, keeping approximation ratios above $0.80$ even at the tightest bandwidths.  This synergy suggests a practical recipe for near-term devices: sparsify first to fit hardware depth budgets, then apply HGLE to recover the solution quality lost to truncation.

\noindent\textbf{Broader applicability.}
The core idea, leverage-score compression of a structured feature matrix followed by reduced-dimension optimization, applies to any variational quantum algorithm whose samples admit a linear feature decomposition, including VQE for molecular Hamiltonians, constrained QAOA variants, and hardware-efficient ansatz circuits.  Extending HGLE to these settings is a natural direction for future work.


\section{Conclusion}
\label{sec:conclusion}

We have presented HGLE, a hybrid quantum-classical algorithm that exploits the low-rank structure inherent in QAOA sample matrices to achieve rank-preserving compression and noise-robust, reduced-dimension parameter estimation.

The algorithm provably preserves the dominant rank-$r$ subspace geometry with only $O(r\log r)$ rows, supports efficient extraction of a reduced QAOA energy surrogate, and enables a practical trust-region classical loop for $(\vgam,\vbet)$ estimation.  By filtering noise-dominated spectral directions, HGLE produces smoother surrogate landscapes that are robust across problem types (Max-Cut on 5--18 qubits, MIS on 6--16 qubits) and graph densities ranging from $0.33$ to $0.85$.  The essential hybrid message is: the quantum device supplies the right samples, while classical randomized linear algebra supplies the right compression.  We view this work as a step toward principled subspace-based methods for scalable variational quantum optimization.


\bibliographystyle{IEEEtran}
\bibliography{references}

\end{document}